\documentclass[11pt]{article} 
\usepackage[dvipsnames,svgnames,x11names]{xcolor}
\usepackage[utf8]{inputenc}
\usepackage[T1]{fontenc}
\usepackage{authblk}
\usepackage{fullpage}
\usepackage{tabularx}
\usepackage{amsthm}
\usepackage{amsmath}
\usepackage{amssymb}
\usepackage{amsfonts}
\usepackage{mathtools}
\usepackage{enumitem}
\usepackage[pagebackref]{hyperref}
\usepackage[sort,numbers]{natbib}
\usepackage[capitalize]{cleveref}
\usepackage{todonotes}
\usepackage{comment}

\usepackage{tikz}

\usetikzlibrary{arrows,decorations.pathmorphing,decorations.pathreplacing,backgrounds,positioning,fit,matrix}
\usetikzlibrary{shapes,calc,patterns,arrows.meta}
\tikzset{
	vert/.style={circle,inner sep=1.5,fill=white,draw=black,minimum size=.3cm},
	vert2/.style={inner sep=1.5,fill=white,draw=black,minimum size=.3cm},
    dummy/.style={circle,fill=black,draw=black,inner sep=2.5},
	edge/.style={color=black, line width=1pt},
	diredge/.style={->,>={Stealth[width=8pt,length=8pt]},color=black, line width=1pt},
	timelabel/.style={fill=white,font=\footnotesize, text centered},
	wave/.style={decorate,decoration={coil,aspect=0}},
	dirwave/.style={->, >={Stealth[width=8pt,length=8pt]},decorate,decoration={coil,aspect=0}},
	diredge2/.style={->,>={Stealth[width=8pt,length=8pt]}}
}

\usepackage{contour}

\usepackage{multirow}
\usepackage{booktabs}
\usepackage{makecell}
\usepackage{caption}
\usepackage{xspace}

\newtheorem{theorem}{Theorem}
\newtheorem{lemma}[theorem]{Lemma}
\newtheorem{observation}[theorem]{Observation}

\newtheorem{proposition}[theorem]{Proposition}

\theoremstyle{definition}

\usepackage{soul}

\crefname{figure}{Figure}{Figures}

\newcommand{\commentout}[1]{}

\newcommand{\defbox}[2]{
	\begin{center}\fbox{
	\begin{minipage}{0.95\textwidth}
		\vspace{1pt}
 
		\noindent
		\textbf{#1}
  
		\vspace{6pt}
  \centering
  \fcolorbox{lightgray!40!white}{lightgray!40!white}{
	\begin{minipage}{0.95\textwidth}
 \begin{minipage}{0.99\textwidth}
		#2
  \end{minipage}
\end{minipage}}
  \vspace{3pt}
	\end{minipage}}
	\end{center}
}

\newcommand{\problemdef}[3]{
	\begin{center}\fbox{
	\begin{minipage}{0.95\textwidth}
		\vspace{1pt}
 
		\noindent
		\textbf{#1}
  
		\vspace{5pt}
  
  \begin{minipage}{0.99\textwidth}
		\setlength{\tabcolsep}{3pt}
		\begin{tabularx}{\textwidth}{@{}lX@{}}
			\textrm{Input:}     & #2 \\
			\textrm{Question:}  & #3
		\end{tabularx}
  \end{minipage}
	\end{minipage}}
	\end{center}
}

\newcommand{\problembox}[2]{
	\begin{center}\fbox{
	\begin{minipage}{0.95\textwidth}
		\vspace{1pt}
 
		\noindent
		\textbf{#1}
  
		\vspace{5pt}
  
  \begin{minipage}{0.99\textwidth}
  \vspace{1pt}
		 #2
  \end{minipage}
	\end{minipage}}
	\end{center}
}

\newcommand{\OO}{\mathcal{O}}
\newcommand{\N}{\mathbb{N}}
\newcommand{\FPTNP}{\textsf{FPT}$^\textsf{NP}$\xspace}

\newcommand{\FPTNPfew}{\textsf{FPT}$^\textsf{NP}$\textsf{[few]}\xspace}
\newcommand{\PNP}{\textsf{P}$^\textsf{NP}$\xspace}

\newcommand{\OKernel}{\PNP-kernel\xspace}
\newcommand{\OKernels}{\PNP-kernels\xspace}

\newcommand{\paraNP}{\textsf{para-NP}\xspace}
\newcommand{\paracoNP}{\textsf{para-coNP}\xspace}
\newcommand{\NP}{\textsf{NP}\xspace}
\newcommand{\SigmaPtwo}{$\mathsf{\Sigma_2^\textsf{P}}$\xspace}
\newcommand{\PiPtwo}{$\mathsf{\Pi_2^\text{P}}$\xspace}
\newcommand{\coNP}{\textsf{coNP}\xspace}
\newcommand{\FPT}{\textsf{FPT}\xspace}

\newcommand{\wx}[1]{\textsf{W[#1]}\xspace}
\newcommand{\wone}{\wx{1}}
\newcommand{\cowone}{\textsf{co}\wx{1}}

\newcommand{\polyadvice}{{\textsf{poly}}\xspace}
\newcommand{\PSPACE}{\textsf{PSPACE}\xspace}

\newcommand{\NoKernelAssume}{\coNP $\subseteq$ \NP/\polyadvice}
\newcommand{\NNoKernelAssume}{\coNP $\not\subseteq$ \NP/\polyadvice}

\newcommand{\NoOKernelAssume}{\PiPtwo $\subseteq$ \SigmaPtwo/\polyadvice}

\newcommand{\yes}{yes\xspace}
\newcommand{\no}{no\xspace}
\newcommand{\truevalue}{true\xspace}
\newcommand{\falsevalue}{false\xspace}

\begin{document}

\title{On Kernelization with Access to NP-Oracles}


\author{Hendrik~Molter\thanks{Supported by the European Union's Horizon Europe research and innovation program under grant agreement~949707.}~}

\author{Meirav~Zehavi}

\affil{\small Department of Computer Science, Ben-Gurion~University~of~the~Negev, 
Beer-Sheva, 
Israel\\ \texttt{molterh@post.bgu.ac.il, meiravze@bgu.ac.il}}

\date{}

\maketitle
\begin{abstract}
\emph{Kernelization} is the standard framework to analyze preprocessing routines mathematically. Here, in terms of efficiency, we demand the preprocessing routine to run in time polynomial in the input size. However, today, various \NP-complete problems are already solved very fast in practice; in particular, SAT-solvers and ILP-solvers have become extremely powerful and used frequently. Still, this fails to 
capture the wide variety of computational problems that lie at higher levels of the polynomial hierarchy. Thus, for such problems, it is natural to relax the definition of kernelization to permit the preprocessing routine to make polynomially many calls to a SAT-solver, rather than run, entirely, in polynomial time. 

Our conceptual contribution is the introduction of a new notion of a kernel that harnesses the power of SAT-solvers for preprocessing purposes, and which we term a {\em \OKernel}. Technically, we investigate various facets of this notion, by proving both positive and negative results, including a lower-bounds framework to reason about the negative results. Here, we consider both satisfiability and graph problems. Additionally, we present a meta-theorem for so-called ``discovery problems''. This work falls into a long line of research on extensions of the concept of kernelization, including \emph{lossy kernels} [Lokshtanov et al.,~STOC~'17], \emph{dynamic kernels} [Alman et al.,~ACM TALG~'20], \emph{counting kernels} [Lokshtanov et al.,~ICTS~'24], and \emph{streaming kernels} [Fafianie and Kratsch,~MFCS~'14].

\smallskip

\noindent\emph{Keywords:}  Parameterized Complexity, Kernelization Algorithms, Polynomial Kernels, Kernelization Lower Bounds, Oracles Classes

\end{abstract}

\section{Introduction}
Effective and efficient preprocessing is the first step of almost any application in modern life. Nowadays, \emph{kernelization} is the standard framework to analyze preprocessing routines mathematically. Accordingly, kernelization was termed ``the lost continent of polynomial time''~\cite{Fellows06}.  Being a subarea of parameterized complexity, kernelization deals with parameterized problems, where every problem instance is associated with a {\em parameter}. The parameter can be any numerical measure of the input or output, such as the sought solution size, the number of variables in a satisfiability/polynomial-based problem, the maximum degree of the graph or its treewidth in a graph problem, the size of the alphabet or the number of strings in a problem involving strings, etc. The most central notion in the context of parameterized problems is that of {\em fixed-parameter tractability (\FPT)}. Formally, a parameterized problem is said to be in \FPT if there exists an algorithm (termed a {\em fixed-parameter algorithm}) that, given an instance $(I,k)$ of the problem, solves it in time $f(k)\cdot |I|^{\OO(1)}$ for some computable function $f$ of $k$.\footnote{We give more formal definitions in \cref{sec:prelims}.} Intuitively, containment in \FPT means that the combinatorial explosion in the running time, which is presumably unavoidable for \NP-hard problems, can be confined to the parameter.

The second most central notion in the context of parameterized problems is that of a {\em kernel}. Formally, we say that a parameterized problem admits a kernel if there exists a polynomial-time algorithm (termed a {\em kernelization algorithm}) that, given an instance $(I,k)$ of the problem, outputs another equivalent instance $(I',k')$ of the problem, such that $|I'|+k'\leq f(k)$ for some computable function~$f$ of $k$. Intuitively, the admittance of a kernel means that every instance of the problem can be preprocessed ``very fast'' (in polynomial time) so that its size is compressed to depend only on $k$.\footnote{Clearly, if we already have that $|I'|+k'\leq f(k)$, then, possibly, no size reduction is performed. However, for an \NP-hard (decidable) problem, we simply cannot expect to always be able to decrease the instance size in polynomial time even by a single bit! Indeed, then we would have been able to just solve the entire instance in polynomial time by repeated applications of the reduction. This anomaly is, in fact, considered to be the main reason why preprocessing routines have not been analyzed in a mathematically rigorous fashion before the advent of parameterized complexity~\cite{Fom+19}.}
Having a particular $f$ in mind, a kernel is termed an $f(k)$-kernel. Obviously, the smaller $f$ is, the better the data compression is. In particular, we are interested in cases where $f$ is a polynomial in $k$---then, a kernel is termed a {\em polynomial kernel}. It is known that a decidable parameterized problem is in \FPT if and only if it admits a kernel~\cite{cai1997advice}. However, unless the polynomial hierarchy collapses, it is long known that there exist parameterized problems that are in \FPT but which do not admit a {\em polynomial} kernel~\cite{BJK14,Bod+09,FS11}. On the positive side, over the past three decades, numerous powerful techniques to develop polynomial kernels as well as meta-theorems to assert their existence were discovered. We refer to the textbook by \citet{Fom+19}, dedicated to kernelization, for more information on this topic. For additional surveys on kernelization, we refer to~\cite{Lokshtanov11,Kratsch14,LokshtanovMS12,GuoN07}, and for additional books on parameterized complexity in general, we refer to~\cite{van2019cognition,Nie06,DF13,FG06,Cyg+15}.

\paragraph{What Is the Meaning of ``Efficiency''?}
The classic definition (given above) of kernelization addresses the requirement of efficiency from the preprocessing routine by restricting it to run in polynomial time. However, today, various \NP-complete problems are already solved very fast in practice. In particular, SAT-solvers~\cite{SakallahM11,biere+21,gomes2008satisfiability,MalikZ09,gong2017survey,alouneh2019comprehensive} and ILP-solvers~\cite{lenstra1983integer,schrijver1998theory,chen2011applied,taha2014integer,wolsey2014integer,junger200950} have become extremely powerful and used frequently.\footnote{For more information on the rapid developments on this subject, we refer the reader to the proceedings of the {\em International Conference on Theory and Applications of Satisfiability Testing} and the {\em International Conference on Integer Programming and Combinatorial Optimization}. }  Thus, the user/researcher only needs to encode the \NP-complete problem at hand to SAT/ILP, and then apply a SAT/ILP-solver on the result. 
Still, this fails to 
capture the wide variety of computational problems that lie at higher levels of the polynomial hierarchy. 
Accordingly, in the fundamental work of~\citet{de2019parameterized} and~\citet{de2017parameterized,de2014fixed,de2019compendium,HaanS15}, it was already suggested to harness the power of SAT/ILP-solvers in the context of fixed-parameter tractability. In particular, they defined the class \FPTNP to consist of all parameterized problems that can be solved by a fixed-parameter algorithm that has oracle access to an \NP-complete problem. Since then, membership in \FPTNP\ was asserted or refuted for problems in various research domains, such as planning~\cite{HaanKP15}, belief revision~\cite{PfandlerRWW15}, reasoning~\cite{HaanS14,SaikkoWJ16,HaanS16}, and agenda safety and judgment aggregation~\cite{EndrissHS15,Haan16}.  

\paragraph{Kernelization in the Presence of SAT/ILP-Solvers.}
We observe that for parameterized problems classified above \NP in the polynomial hierarchy, it is very natural to harness the power of SAT/ILP-solvers in the context of kernelization as well. Specifically, we suggest to
{\bf relax the definition of kernelization to permit the preprocessing routine to make polynomially many calls to a SAT/ILP-solver rather than run entirely in polynomial time}. 

Accordingly, we introduce a new notion of a kernel, termed a {\em \OKernel}. Formally, we say that a parameterized problem admits a \OKernel if there exists a polynomial-time algorithm having oracle access to an \NP-complete problem (such as \textsc{Satisfiability} or \textsc{Integer Linear Programming}) that, given an instance $(I,k)$ of the problem, outputs another instance $(I',k')$ of the problem, such that $|I'|+k'\leq f(k)$ for some computable function $f$ of $k$. Analogously to the case of standard kernelization, we show that a (decidable) parameterized problem is in \FPTNP\  if and only if it admits a \OKernel. Thus, again analogously to the case of standard kernelization, the main question for a given parameterized problem in \FPTNP\ is whether it admits a {\em polynomial} \OKernel. Obviously, the answer is trivially \yes for parameterized problems in \NP or which admit a (standard) polynomial kernel. However, as stated earlier, there exists a wide range of problems for which neither is true. We further discuss the notion of a \OKernel\ in \cref{sec:orableKernels}. 

We remark that \citet{de2019parameterized} discusses ``non-deterministic kernelization'' in Chapter 16 as a future research direction. Here, the kernelization algorithm runs in non-deterministic polynomial time. 
This is somewhat similar to a kernelization algorithm that may use an \NP-oracle at most once (this is not a definition, there is a subtle difference, but it conveys the intuition). We give a brief discussion in \cref{sec:orableKernels} but refer for details to \citet{de2019parameterized}.
 Additionally, \citet{de2019parameterized} provides discussions about some restricted variants, such as non-deterministic polynomial kernelization.
To the best of our knowledge, the work by \citet{de2019parameterized} is the only prior work that considers kernelization algorithms with oracle access to an \NP-complete problem.

\paragraph{Our Contributions.}
Our main goal lies in providing new tools for obtaining parameterized algorithms that exploit the power of SAT/ILP-solvers.
As discussed above, we make the following two conceptual contributions:
\begin{itemize}
    \item We lift the concepts of kernels to the oracle setting by formally introducing (polynomial) \OKernels\ (\cref{sec:orableKernels}).
    \item 
    We expand the framework based on OR-cross-compositions to refute the existence of polynomial kernels~\cite{Bod+09,FS11,BJK14,Fom+19,DellM14} to be applicable for refuting the existence of polynomial \OKernels\ as well (\cref{sec:lower}).
\end{itemize}
 Then, we turn to prove or refute the existence of polynomial \OKernels\ for three different (individual or classes of) parameterized problems. Towards this, we first assert that the investigation of polynomial \OKernels\ for these specific problems is of interest. This is the case if the following situation is given:
\begin{itemize}
    \item The (unparameterized) problem is unlikely to be contained in \PNP, since otherwise it can be solved directly by the kernelization algorithm and hence trivially admits a polynomial \OKernel. This can be done by showing hardness for a complexity class beyond \NP, such as e.g.\ \SigmaPtwo, \PiPtwo, or \PSPACE.\footnote{We give formal definitions in \cref{sec:prelims}.}
    \item The problem is unlikely to admit a (standard) polynomial kernel, since otherwise it trivially also admits a polynomial \OKernel. This can be done using the standard techniques for refuting polynomial kernels or by showing hardness for a parameterized complexity class beyond \FPT, such as e.g.\ \wone, \cowone, \paraNP, or \paracoNP.
    \item The problem is contained in \FPTNP, since otherwise it trivially does not admit a polynomial \OKernel (under standard complexity-theoretical assumptions).
\end{itemize} 
Then, we assert the following:
\begin{itemize}
\item {\bf Satisfiability.}  First, we apply our framework to the canonical \SigmaPtwo-complete problem $\exists\forall$-\textsc{Satisfiability}. We prove that: {\em (i)} $\exists\forall$-DNF parameterized by the number of variables does not admit a polynomial \OKernel unless \NoOKernelAssume; {\em (ii)} $\exists\forall$-DNF admits a polynomial \OKernel when parameterized by the ``size of the existential subformula''. (See \cref{sec:existsforallsat}.)

\item {\bf Clique Hitting.} Second, we apply our framework to \textsc{Clique-Free Vertex Deletion}, which is a natural generalization of \textsc{Vertex Cover} or \textsc{Triangle-Free Vertex Deletion}. The problem known to be \SigmaPtwo-hard and defined as follows. Given graph $G=(V,E)$ and two integers $h$ and $k$, the objective is to determine whether there exists a vertex set $V'\subseteq V$ with $|V'|\le h$ such that $G-V'$ does not contain a clique of size $k$. We prove \textsc{Clique-Free Vertex Deletion} admits a polynomial \OKernel when parameterized by $(k+\#k\text{-cliques})$, where $\#k\text{-cliques}$ is the number of cliques of size $k$ in the input graph. Then, we study the weighted variant of this problem, called \textsc{Weighted Clique-Free Vertex Deletion}. We prove that  \textsc{Weighted Clique-Free Vertex Deletion} still admits a polynomial \OKernel when parameterized by $(k+\#\text{weight-}k\text{-cliques})$, but that \textsc{Weighted Clique-Free Vertex Deletion} parameterized by $(h+k)$ does not admit a polynomial \OKernel unless \NoOKernelAssume.  (See \cref{sec:clique}.)

\item {\bf Discovery Problems.} Third, we study a class of problems termed discovery problems, starting with a concrete example. Specifically, we first consider the \textsc{Discovery Vertex Cover Reconfiguration} problem, defined as follows. The input consists of a vertex set $V$ and for each pair $u,v\in V$ of vertices an instance $I_{u,v}$ of \textsc{Satisfiability}, two minimal vertex covers $S,T\subseteq V$ for the discovered graph $G=(V,E)$, and two integers $k,\ell$. Here, intuitively, we do not know $E$, but can ``discover'' it by solving the instances $I_{u,v}$. Then, the objective is to determine whether there exists a sequence of $\ell$ vertex covers $X_1,\ldots,X_\ell$ of size at most~$k$ for the discovered graph $G$ such that $X_1=S$, $X_\ell=T$, and for each $i\in[\ell-1]$ we have~$|X_i\triangle X_{i+1}|\le 1$.\footnote{We use the symbol $\triangle$ for the symmetric difference. For two sets $S,T$ we have $S\triangle T=(S\cup T)\setminus (S\cap T)$.} We prove that \textsc{Discovery Vertex Cover Reconfiguration} admits a polynomial \OKernel when parameterized by $k$. Afterwards, we consider the ``generic'' form of a so-called discovery problem (that may not necessarily be a graph problem), and provide a meta-theorem for asserting the existence of \OKernels for such problems.
(See \cref{sec:discovery}.)
\end{itemize}

\paragraph{Other Extensions of the Concept of Kernelization.}
Throughout the years, the concept of a kernel has been extended in various ways. 
Since the discussion of these extensions is beyond the scope of this paper, we only briefly mention them. Among them, the oldest and most well-known one is that of a {\em Turing kernel}, where we need not to produce just a single compressed instance of the problem at hand, but are allowed to produce polynomially many such compressed instances. For a discussion of this concept, we refer to Chapter~22 in the book by \citet{Fom+19}. The second most well-known extension of the concept of a kernel is that of a {\em lossy kernel}, which was introduced by \citet{LokshtanovPRS17} and aims to make kernelization suitable for the purpose of computing approximate solutions for optimization problems. For more information, we refer to Chapter~23 in the book by \citet{Fom+19}. In a similar vein to that of a lossy kernel, the concept of a kernel was extended to be suitable also for dealing with other types of problems, such as streaming problems by \citet{fafianie2014streaming}, semi-streaming problems by \citet{LokshtanovMPRSZ24}, dynamic problems by \citet{AlmanMW20}, problems on randomly distributed inputs (average case analysis of kernel sizes) by \citet{friedrich2015kernel}, enumeration problems by \citet{creignou2017paradigms}, and, in several different ways, also for counting problems~(see, e.g., \citet{LokshtanovM0Z24} and citations therein for earlier works). Furthermore, there is research on the complementary question to ours by \citet{BannachT18,BannachT20}: What can we achieve if the kernel is required to be computable in parallel?

\section{Preliminaries}\label{sec:prelims}
In this section, we give an overview of all definitions, notations, and terminology used in this work. For a natural number $n\in \mathbb{N}$, we use $[n]$ to denote the set $\{1,2,\ldots,n\}$.

\paragraph{Classical Complexity.} We use standard concepts of classical computational complexity theory~\cite{AB09,GJ79}. The complexity classes \SigmaPtwo and \PiPtwo~\cite{AB09,Sto76} are located in the second level of the polynomial-time hierarchy and contain both \NP and \coNP.
They are closed under polynomial-time many-one reductions. The class \SigmaPtwo, intuitively, contains all problems that are at most as hard as the problem \textsc{$\exists\forall$-Satisfiability} ($\exists\forall$-DNF)~\cite{AB09,Sto76}, where we are given a Boolean formula $\phi$ in disjunctive normal form and the variables of $\phi$ are partitioned into two sets $X$ and $Y$, and we are asked to decide whether there exists an assignment for all variables in $X$ such that for all possible assignments for the variables in $Y$, such that the formula $\phi$ evaluates to \truevalue. The problem \textsc{$\exists\forall$-DNF} is complete for \SigmaPtwo~\cite{AB09,Sto76}. 
Notably, the problem \textsc{$\exists\forall$-CNF} is contained in \NP~\cite{AB09}. Here, we are given a Boolean formula $\phi$ in conjunctive normal form and the variables of $\phi$ are partitioned into two sets $X$ and~$Y$, and we are asked to decide whether there exists an assignment for all variables in $X$ such that for all possible assignments for the variables in $Y$, such that the formula $\phi$ evaluates to \truevalue. 
The class \PiPtwo, intuitively, contains all problems that are at most as hard as the problem \textsc{$\forall\exists$-Satisfiability} ($\forall\exists$-CNF)~\cite{AB09,Sto76}, where we are given a Boolean formula $\phi$ in conjunctive normal form and the variables of $\phi$ are partitioned into two sets $X$ and $Y$, and we are asked to decide whether for all possible assignments for all variables in $X$ there exists an assignment for the variables in $Y$, such that the formula $\phi$ evaluates to \truevalue. The problem \textsc{$\forall\exists$-CNF} is complete for \PiPtwo~\cite{AB09,Sto76}. The complexity class \PSPACE contains all problems that can be solved using polynomial space and contains both \SigmaPtwo and \PiPtwo~\cite{AB09}.

\paragraph{Parameterized Complexity.} We use standard notation and terminology from parameterized
complexity theory~\cite{Nie06,DF13,FG06,Cyg+15,van2019cognition,Fom+19}
and give here an overview of the concepts used in this work.
A \emph{parameterized problem} is a language $L\subseteq \{0,1\}^* \times \N$. We call the second component
the \emph{parameter} of the problem.
A parameterized problem is \emph{fixed-pa\-ram\-e\-ter tractable} (in the complexity class \FPT{})
if there is an algorithm that solves each instance~$(x,r)\in\{0,1\}\times \mathbb{N}$ in~$f(r) \cdot |x|^{\OO(1)}$ time,
for some computable function $f$.
A decidable parameterized problem $L$ admits a \emph{kernel} if there is a polynomial-time algorithm that transforms each instance $(x,r)\in\{0,1\}\times \mathbb{N}$ into an instance $(x',r')\in\{0,1\}\times \mathbb{N}$ such that $|x'|+ r'\le f(r)$,
for some computable function~$f$, and $(x',r')\in L$ if and only if $(x,r)\in L$. 
It is known that a decidable parameterized problem $L$ admits a kernel if and only if it is fixed-parameter tractable~\cite{cai1997advice}.
A decidable parameterized problem $L$ admits a \emph{polynomial kernel} if there is a polynomial-time algorithm that transforms each instance $(x,r)\in\{0,1\}\times \mathbb{N}$ into an instance $(x',r')\in\{0,1\}\times \mathbb{N}$ such that $|x'|+ r'\in r^{\OO(1)}$ and $(x',r')\in L$ if and only if $(x,r)\in L$. 
Presumably, not all problems that are fixed-parameter tractable admit a polynomial kernel~\cite{Fom+19}.

If a parameterized problem is hard for the parameterized complexity class \wone or \cowone, then it is (presumably) not in~\FPT{}. 
Informally, parameterized problems that are hard for \wone\ are at least as hard as \textsc{Clique} parameterized by the solution size. In \textsc{Clique} we are given a graph $G$ and a ``solution size'' $k$, and are asked to decide whether $G$ contains a set of $k$ vertices that are all pairwise connected by an edge.
The complexity classes \wone\ and \cowone\ are closed under parameterized reductions, which may run in \FPT-time and additionally set the new parameter to a value that exclusively depends on the old parameter. 
If the unparameterized version of a parameterized problem is \NP-hard (resp.\ \coNP-hard) for constant parameter values, then the problem is \paraNP-hard (resp.\ \paracoNP-hard).

\paragraph{Oracle Classes.} 
When we say that an algorithm $A$ or a machine $M$ has \emph{oracle access} to a problem or language $L$, then we assume that $A$ or $M$ can answer queries of the type $x\in L$ in constant time.
The class \PNP is the set of all problems that can be solved by a polynomial-time algorithm that has oracle access to an \NP-complete problem~\cite{AB09,Sto76}.
The class \SigmaPtwo (introduced above) can also be characterized as the set of all problems that can be solved by an \NP-machine that has oracle access to an \NP-complete problem~\cite{AB09,Sto76}.

The class \FPTNP is the set of all parameterized problems that can be solved by an \FPT-algorithm that has oracle access to an \NP-complete problem~\cite{de2017parameterized,de2014fixed,de2019compendium,de2019parameterized,HaanS15}. Natural subclasses are obtained by reducing the number of allowed oracle calls.
Note that restricting the number of oracle calls to be polynomial
in the input size (and independent of the parameter) does not yield a new complexity class.
\begin{lemma}
    Let $C$ be the set of all parameterized problems that can be solved by an \FPT-algorithm that has oracle access to an \NP-complete problem and uses at most $n^{\OO(1)}$ queries to the oracle, where~$n$ is the input size. Then we have that $C$ $=$ \FPTNP.
\end{lemma}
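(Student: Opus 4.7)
The direction $C \subseteq \FPTNP$ is immediate from the definitions: any algorithm witnessing membership in $C$ is in particular an \FPT-algorithm with \NP-oracle access, and hence witnesses membership in \FPTNP. I focus on the reverse inclusion. Let $L \in \FPTNP$ be decided by an \FPT-algorithm $A$ with running time $f(r)\cdot n^c$ for some computable $f$ and constant $c$, where $n$ denotes the input size. Since the number of oracle queries of $A$ is bounded by its running time, $A$ makes at most $f(r)\cdot n^c$ many queries, and the task is to construct an equivalent algorithm $A'$ making only $n^{\OO(1)}$ queries.

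The plan is a case distinction, performed by $A'$ at the start of its execution, based on the relative magnitudes of $f(r)$ and $n$ (testable in \FPT-time by running the computation of $f(r)$ for at most $n$ steps). In the \emph{input-dominated case} where $f(r) \le n$, the total running time of $A$ is bounded by $n^{c+1}$, so the number of oracle queries is at most $n^{c+1} \in n^{\OO(1)}$, and we simply take $A' := A$. In the \emph{parameter-dominated case} where $f(r) > n$, the input size $n$ is itself bounded by some computable function $g(r)$ of the parameter alone, so the entire running time of $A$ is bounded by $h(r) := f(r)\cdot g(r)^c$, a function of $r$ only. In this regime, the algorithm $A'$ can afford to spend \FPT-time computations of size $h(r)$ between oracle calls, which it uses to batch the many queries of $A$ into a polynomial-in-$n$ number of consolidated \NP-queries: between two calls of $A'$ to the oracle, $A'$ explicitly precomputes the sequence of oracle queries $A$ would make along the next segment of its computation and packages the entire batch into a single \NP-instance before querying the oracle; iterating this $n^{\OO(1)}$ times completes $A$'s simulation.

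The main technical obstacle is precisely the consolidation step in the parameter-dominated regime: naively, adaptive \NP-oracle queries cannot be compressed into a single query, because the certification of a guessed transcript of \yes- and \no-answers is naturally a $\SigmaPtwo$ predicate rather than an \NP one. I expect the resolution to exploit the \FPT-time slack by limiting each batch to a bounded-depth segment of $A$'s computation tree whose answers $A'$ enumerates explicitly on the side, so that the \NP-oracle is asked only to resolve the next decisive bit per batch and the batch remains expressible as a single \NP-instance, keeping the overall query count polynomial in $n$.
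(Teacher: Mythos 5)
Your split into the input-dominated ($f(r) \le n$) and parameter-dominated ($f(r) > n$) cases is the right skeleton, and the input-dominated case is handled identically to the paper: the running time, hence the query count, is already $n^{\OO(1)}$, so no modification of $A$ is needed. The parameter-dominated case is where the proposal has a genuine gap. You attempt to keep simulating $A$'s oracle queries while batching them into fewer \NP-calls, and you yourself correctly identify the obstacle: a batch of adaptive \NP-queries together with guessed \yes/\no\ answers is not naturally an \NP-predicate (verifying a guessed \no\ answer is a \coNP task, so the transcript check lands in \SigmaPtwo or worse). Your proposed resolution --- bounded-depth segments whose answers $A'$ ``enumerates explicitly on the side'' --- does not actually close this, because $A'$ has no way of knowing the correct answers without either calling the oracle (defeating the query count) or deciding the queries itself (which you never spell out), and ``the next decisive bit per batch'' is not a well-defined \NP-instance.

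The paper's proof sidesteps the batching difficulty entirely with a much simpler observation you miss: when $f(r) > n$, the input size $n$ is bounded by $f(r)$, and $L$ is \emph{decidable}, so the brute-force decision procedure for $L$ runs in time $g(n) \le g(f(r))$ for some computable $g$. That is already an \FPT-algorithm making \emph{zero} oracle queries --- one need not simulate $A$ at all in this regime. (If you wanted to salvage a simulation-based argument, the honest version would be to brute-force each individual \NP-query, whose size is at most $h(r)$, in time $2^{h(r)^{\OO(1)}}$; this again yields zero genuine oracle calls rather than ``polynomial in $n$'' as you claim. But the decidability shortcut is cleaner and is what the paper uses.) As written, your proposal has identified the obstacle correctly but has not overcome it, so the reverse inclusion $\FPTNP \subseteq C$ is not established.
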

\begin{proof}
    Clearly, we have that $C\subseteq$ \FPTNP. Now let $L$ be a parameterized problem in \FPTNP. Then there exists \FPT-algorithm that has oracle access to an \NP-complete problem and decides $L$. Let $f(r)\cdot n^{\OO(1)}$ denote the running time of the algorithm, where $r$ is the parameter and $n$ is the input size. If $f(r)\le n$ for some instance of $L$, then at most $n^{\OO(1)}$ oracle queries are used to solve that instance. If $f(r)>n$ for some instance of $L$, then that instance can be solved in \FPT-time without oracle queries. It follows that $L\in C$.
\end{proof}



The class \FPTNPfew is the set of all parameterized problems that can be solved by an \FPT-algorithm that has oracle access to an \NP-complete problem and uses at most $f(r)$ queries to the oracle, where $r$ is the parameter and $f$ is some computable function~\cite{de2017parameterized,de2014fixed,de2019compendium,de2019parameterized,HaanS15}.

\paragraph{Advice Classes.} The class \NP/\polyadvice contains all problems that can be solved by an \NP-machine that gets an additional advice string of polynomial length as input~\cite{AB09}. The advice string is allowed to depend on the length of the input, but not on the input itself, and its size needs to be polynomially bounded by the length of the input.
Analogously, the class \SigmaPtwo/\polyadvice contains all problems that can be solved by an \NP-machine with oracle access to an \NP-complete problem that gets an additional advice string of polynomial length as input~\cite{AB09}.

\paragraph{Kernelization Lower Bounds.}
The cross-composition framework~\cite{BJK14,Bod+09,FS11,Dru15,Fom+19,Dell16,DellM14} is a popular tool to refute the existence of a polynomial kernel for a parameterized problem under the assumption that \NNoKernelAssume, the negation of which would cause a collapse of the polynomial-time hierarchy to the third
level. 
Informally, in a cross-composition, we have to \emph{compose} many problem instances of an \NP-hard problem into one big instance of the problem we want to investigate. This composition should then have the property that the big instance is either a \yes-instance if and only if at least one of the input instances is a \yes-instance (in the case of OR-cross-compositions), or if and only if all input instances are \yes-instances (in the case of AND-cross-compositions). This then refutes polynomial kernels for the problem under investigation when parameterized by any parameter that only depends on the maximum size of the input instances (and not on the number of input instances).
In order to formally introduce the framework, we need some definitions first.
An equivalence
relation~$R$ on the instances of some problem~$L$ is a
\emph{polynomial equivalence relation}~if
\begin{enumerate}
 \item one can decide for every two instances in time polynomial in their sizes whether they belong to the same equivalence class, and
 \item for each finite set~$S$ of instances, $R$ partitions the set into at most~$(\max_{x \in S} |x|)^{\OO(1)}$ equivalence classes.  
\end{enumerate}

Using this, we can now define OR-cross-compositions.
An \emph{OR-cross-composition} of a problem~$L\subseteq \{0,1\}^*$ into a
parameterized problem~$P$ (with respect to a polynomial equivalence
relation~$R$ on the instances of~\(L\)) is an algorithm that takes
$t$ $R$-equivalent instances~$x_1,\ldots,x_t$ of~$L$ and
constructs in time polynomial in $\sum_{i=1}^t |x_i|$ an instance
$(y,k)$ of~\(P\) such that
\begin{enumerate}
\item $k$ is polynomially upper-bounded in $\max_{i\in [t]}|x_i|+\log t$, and 
\item $(y,k)$ is a \yes-instance of $P$ if and only if there is an $i\in [t]$ such that $x_{i}$ is a \yes-instance of~$L$. 
\end{enumerate}

If an \NP-hard problem~\(L\) OR-cross-composes into a parameterized
problem~$P$, then~$P$ does not admit a polynomial kernel, unless \NoKernelAssume~\cite{Bod+09,FS11,BJK14,Fom+19,DellM14}.

AND-cross-compositions are defined analogously~\cite{Dru15,BJK14,Dell16,Fom+19}. We omit the discussion of the details here.



\section{Oracle Kernels}\label{sec:orableKernels} 
In this section, we introduce the main concept of this paper: a \OKernel. To this end, we lift the concept of kernels to the oracle setting in the following manner.
\defbox{\OKernel}{
A decidable parameterized problem $L$ admits a \emph{\OKernel} if there is a \PNP-algorithm that transforms each instance $(x,r)\in\{0,1\}^*\times \mathbb{N}$ into an instance $(x', r')\in\{0,1\}^*\times \mathbb{N}$ such that
\begin{enumerate}
    \item $|x'|+ r'\le f(r)$,
for some computable function $f$, and
\item $(x',r')\in L$ if and only if $(x,r)\in L$.
\end{enumerate} 
}
Analogous to the case of kernels and fixed-parameter tractable problems, we get the following.
\begin{theorem}\label{thm:kernelequiv}
    A decidable parameterized problem $L$ is in \FPTNP if and only if it admits a \OKernel.
\end{theorem}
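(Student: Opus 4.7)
The proof would mirror the classical equivalence between \FPT{} and kernelization due to Cai, Chen, Downey, and Fellows, with the twist that both the kernelization and the decision procedure now have oracle access to an \NP-complete problem. I would split it into two directions.

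For the ``if'' direction, suppose $L$ admits a \OKernel\ via a \PNP-kernelization algorithm $K$. Given an instance $(x,r)$, run $K$ to obtain an equivalent instance $(x',r')$ with $|x'|+r'\le f(r)$ in polynomial time with \NP-oracle access. Since $L$ is decidable, there is a computable function $g$ and an algorithm deciding $L$ on any instance $(z,s)$ in time $g(|z|+s)$; applying it to $(x',r')$ takes time at most $g(f(r))$. Composing gives an algorithm running in time $g(f(r))\cdot |x|^{\OO(1)}$ with oracle access to an \NP-complete problem, which witnesses $L\in\FPTNP$.

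For the ``only if'' direction, suppose $L\in\FPTNP$ via an algorithm $A$ running in time $f(r)\cdot |x|^{\OO(1)}$ with oracle access to an \NP-complete problem. If $L=\emptyset$ or $L=\{0,1\}^*\times\N$, a trivial constant-size output suffices, so assume otherwise, and fix in advance a \yes-instance $(y_\yes,r_\yes)\in L$ and a \no-instance $(y_\no,r_\no)\notin L$ (both of constant size). Given an input $(x,r)$, the kernelization algorithm proceeds as follows. Compare $|x|$ with $f(r)$, which is computable from $r$. If $|x|\le f(r)$, output $(x,r)$; then $|x|+r\le f(r)+r$, bounded by a computable function of $r$. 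Otherwise, $f(r)<|x|$, so the running time of $A$ on $(x,r)$ is at most $|x|^{\OO(1)+1}$, i.e., polynomial in the input size, and therefore $A$ can be simulated by a \PNP-algorithm; run this simulation and output $(y_\yes,r_\yes)$ or $(y_\no,r_\no)$ according to the outcome. In both cases the output is an equivalent instance of size bounded by a computable function of $r$, and the whole procedure runs in polynomial time with \NP-oracle access, so it is a \OKernel.

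The only conceptually delicate point is verifying that the simulation step in the second case fits inside \PNP: the FPT algorithm $A$ may make $f(r)\cdot|x|^{\OO(1)}$ oracle calls in general, but once we are in the branch $f(r)<|x|$, this count becomes polynomial in $|x|$, which is precisely the budget a \PNP-machine has. Everything else is a direct transcription of the classical argument, and no new combinatorial idea is required.
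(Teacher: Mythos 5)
Your proposal follows the paper's proof essentially verbatim: the "if" direction composes the $\PNP$-kernelization with the brute-force decider on the compressed instance, and the "only if" direction splits on $|x|$ versus $f(r)$, simulating the $\FPTNP$ algorithm (now running in polynomial time with polynomially many oracle calls) when $f(r)<|x|$ and returning the input unchanged otherwise. The extra details you add (fixing trivial \yes/\no instances, noting $L\neq\emptyset,\{0,1\}^*\times\N$, and explicitly verifying that the oracle-call budget becomes polynomial in the branch $f(r)<|x|$) are consistent with, and slightly more careful than, the paper's exposition.
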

\begin{proof}
Assume that $L$ is in \FPTNP. 
Then there is an \FPT-algorithm that has oracle access to an \NP-complete problem that solves each instance $(x,r)\in\{0,1\}\times \mathbb{N}$ in $f(r)\cdot |x|^{\OO(1)}$ time, for some computable function $f$. We describe a \PNP-kernelization-algorithm in the following. Consider some instance $(x,r)\in\{0,1\}\times \mathbb{N}$. Then, if $f(r)<|x|$, then the instance can be solved by an \PNP-algorithm. Hence, a trivial kernel can be produced by a \PNP-kernelization-algorithm.
If $f(r)>|x|$, then the problem instance is already bounded by a function in the parameter. Hence, the \PNP-kernelization-algorithm can output the original instance.

Assume that the parameterized problem $L$ admits a \OKernel. Consider some instance $(x,r)\in\{0,1\}\times \mathbb{N}$. Then using the \PNP-kernelization-algorithm we can produce an instance $(x',r')\in\{0,1\}\times \mathbb{N}$ with $|(x',r')|\le f(r)$, for some computable function $f$. Since $L$ is decidable, there is an algorithm $A$ solving $(x',r')$ with running time $g(|(x',r')|)$, for some computable function $g$. Hence, we can solve $(x,r)$ in $g(f(r))\cdot |x|^{\OO(1)}$ time with oracle access to an \NP-complete problem by first invoking the \PNP-kernelization-algorithm and then algorithm $A$.
\end{proof}

Similar to the classical setting, we are mostly interested in finding \OKernel{s} of polynomial size. To this end, we give the main definition of this work in the following.

 \defbox{Polynomial \OKernel}{
A decidable parameterized problem $L$ admits a \emph{polynomial \OKernel} if there is a \PNP-algorithm that transforms each instance $(x,r)\in\{0,1\}^*\times \mathbb{N}$ into an instance $(x', r')\in\{0,1\}^*\times \mathbb{N}$ such that
\begin{enumerate}
    \item $|x'|+r'\in r^{\OO(1)}$, and
\item $(x',r')\in L$ if and only if $(x,r)\in L$.
\end{enumerate}} 

We further give a definition of (polynomial) compressions in the oracle setting, which is analogous to the ``classical'' (polynomial) compression~\cite{Fom+19}. A decidable parameterized problem $L$ admits a \emph{(resp.\ polynomial) \PNP-compression} into a problem $R\subseteq\{0,1\}^*$ if there is a \PNP-algorithm that transforms each instance $(x,r)\in\{0,1\}^*\times\mathbb{N}$ into a string $y$ such that 
\begin{enumerate}
    \item $|y|\le f(r)$,
for some computable function $f$, (resp.\ $|y|\in r^{\OO(1)}$), and
    \item $y \in R$ if and only if $(x,r)\in L$.
\end{enumerate}
Note that (polynomial) \PNP-compressions are weaker than (polynomial) \OKernel{s} in the following sense.

\begin{observation}\label{obs:kernelcompression}
    If a parameterized problem $L$ admits a (polynomial) \OKernel, then it admits a (polynomial) \PNP-compression into the unparameterized version of itself.
\end{observation}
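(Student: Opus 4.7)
The plan is to turn the \OKernel{} into a \PNP-compression by simply concatenating the output parameter into the output instance and interpreting the result as a string of the unparameterized version of $L$. Let $A$ denote the \PNP-algorithm guaranteed by the \OKernel, which on input $(x,r)$ outputs $(x',r')$ with $|x'|+r' \le f(r)$ (or $|x'|+r' \in r^{\OO(1)}$ in the polynomial case) and $(x',r')\in L \iff (x,r)\in L$. Let $R := \{\mathrm{enc}(x',r') : (x',r') \in L\} \subseteq \{0,1\}^*$ denote the unparameterized version of $L$, where $\mathrm{enc}(\cdot,\cdot)$ is a standard pairing of an instance with its parameter into a single binary string.

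The compression algorithm is then the obvious one: on input $(x,r)$, run $A$ to obtain $(x',r')$ and output $y := \mathrm{enc}(x',r')$. This is a \PNP-algorithm because $A$ is, and the encoding itself is a polynomial-time post-processing step. Correctness is immediate from the kernel guarantee: $y \in R$ iff $(x',r') \in L$ iff $(x,r) \in L$.

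For the size bound, a reasonable encoding satisfies $|\mathrm{enc}(x',r')| \le |x'| + \lceil \log(r'+1) \rceil + c$ for a small constant $c$ coming from delimiters. Using $|x'| + r' \le f(r)$, this is bounded by $f(r) + \OO(\log f(r))$, which is still a computable function of $r$; in the polynomial regime, $|x'|+r' \in r^{\OO(1)}$ yields $|y| \in r^{\OO(1)}$ as required.

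I do not expect any real obstacle here: the statement is essentially a bookkeeping observation that a kernel is always at least as strong as a compression into the unparameterized version, once one notes that the output parameter $r'$ can be folded into the output string without changing the asymptotic size bound.
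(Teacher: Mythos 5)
Your proof is correct and is the obvious argument that the paper implicitly has in mind: the paper states this as an \emph{observation} without giving any proof, treating it as immediate. Your bookkeeping about folding $r'$ into the output string via a pairing function and noting that $\log r'$ is dominated by the existing bound $|x'|+r'\le f(r)$ (resp.\ $\in r^{\OO(1)}$) is exactly the routine verification one would write out.
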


We remark that \citet[Chapter 16]{de2019parameterized} discusses ``non-deterministic kernelization''. He introduces a kernelization concept where the kernelization algorithm may run in non-deterministic polynomial time. This kernelization concept captures the parameterized complexity class \paraNP, that is, a decidable parameterized problem admits a kernelization algorithm of the described kind if and only if it is contained in \paraNP. For technical details, we refer to \citet[Chapter 16]{de2019parameterized}.



\section{Oracle Kernelization Lower Bounds}\label{sec:lower}
In this section, we present a method for refuting the existence of polynomial \OKernel{s} for parameterized problems.
To this end, we lift the OR-cross-composition framework~\cite{Bod+09,FS11,BJK14,Fom+19,DellM14} discussed in \cref{sec:prelims} to the oracle setting. In this setting, we can make the polynomial equivalence relation stronger, as we can also employ oracle accesses here. An equivalence
relation~$R$ on the instances of some problem~$L$ is a
\emph{polynomial \PNP-equivalence relation} if
\begin{enumerate}
 \item one can decide for every two instances in \PNP-time in their sizes whether they belong to the same equivalence class, and
 \item for each finite set~$S$ of instances, $R$ partitions the set into at most~$(\max_{x \in S} |x|)^{\OO(1)}$ equivalence classes.  
\end{enumerate}

Using this, we can now define \PNP-OR-cross-compositions. 

\defbox{\PNP-OR-cross-composition}{
A \emph{\PNP-OR-cross-composition} of a problem~$L\subseteq \{0,1\}^*$ into a
parameterized problem~$P$ (with respect to a polynomial \PNP-equivalence
relation~$R$ on the instances of~\(L\)) is an algorithm that takes
$t$ $R$-equivalent instances~$x_1,\ldots,x_t$ of~$L$ and
constructs in \PNP-time in $\sum_{i=1}^t |x_i|$ an instance
$(y,k)$ of~\(P\) such that
\begin{enumerate}
\item $k$ is polynomially upper-bounded in $\max_{i\in[t]}|x_i|+\log t$, and 
\item $(y,k)$ is a \yes-instance of $P$ if and only if there is an $i\in [t]$ such that $x_{i}$ is a \yes-instance of~$L$. 
\end{enumerate}}

Formally, we can use \PNP-OR-cross-compositions to refute the existence of polynomial \OKernel{s} as follows.

\begin{theorem}\label{thm:okernellowerbound}
   If a \SigmaPtwo-hard problem~\(L\) \PNP-OR-cross-composes into a parameterized
problem~$P$, then~$P$ does not admit a polynomial \OKernel, unless \NoOKernelAssume. 
\end{theorem}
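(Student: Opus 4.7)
The plan is to lift the classical lower-bound argument for polynomial kernels via OR-cross-compositions~\cite{Bod+09,FS11,BJK14,Fom+19,DellM14} to the oracle setting, relativizing every step by \NP. At a high level, I would compose the hypothetical polynomial \OKernel with the given \PNP-OR-cross-composition to obtain a polynomial \PNP-OR-distillation of~$L$, and then invoke the \NP-relativized form of the Fortnow--Santhanam / Dell--van Melkebeek impossibility theorem to derive the stated collapse.

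Concretely, suppose for contradiction that $P$ admits a polynomial \OKernel, witnessed by a \PNP-algorithm $\mathcal{K}$. Let $\mathcal{C}$ denote the given \PNP-OR-cross-composition of $L$ into $P$. Pipelining $\mathcal{C}$ followed by $\mathcal{K}$ yields a single \PNP-algorithm $\mathcal{D}$ that, given $t$ $R$-equivalent instances $x_1,\ldots,x_t$ of~$L$, outputs an instance $(y',k')$ of $P$ of total size polynomial in $\max_{i\in[t]}|x_i|+\log t$ and satisfying: $(y',k')$ is a \yes-instance of $P$ iff at least one $x_i$ is a \yes-instance of $L$. Viewing the unparameterized version of $P$ as a target language, $\mathcal{D}$ is therefore a polynomial \PNP-OR-distillation of $L$, subject only to the mild additional constraint that its inputs lie in a common $R$-equivalence class (which is itself decidable in \PNP by definition of a polynomial \PNP-equivalence relation and so does not interfere with the argument).

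The core step is to show that the existence of such a $\mathcal{D}$ forces $L$ into \PiPtwo/\polyadvice. I plan to do this by a verbatim relativization of the classical Fortnow--Santhanam / Dell--van Melkebeek proof, whose ingredients are: (i) a counting/incompressibility argument showing that sufficiently many input tuples must collide under the distillation; (ii) a probabilistic argument in the style of Adleman selecting polynomially many no-instances of $L$ as advice that ``cover'' all no-instances through the distillation; and (iii) a nondeterministic verifier that, on input $x$ together with the advice, uses the distillation to certify membership in $\bar L$. Because $\mathcal{D}$ itself uses an \NP-oracle, the final verifier (classically an \NP-machine with polynomial advice) gains \NP-oracle access and thereby lives in \SigmaPtwo/\polyadvice; hence $\bar L$ is in \SigmaPtwo/\polyadvice, equivalently $L$ is in \PiPtwo/\polyadvice. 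Since $L$ is \SigmaPtwo-hard and \PiPtwo/\polyadvice is closed under polynomial-time many-one reductions, we obtain \SigmaPtwo $\subseteq$ \PiPtwo/\polyadvice, which by complementation is exactly \NoOKernelAssume.

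The main obstacle I anticipate is not the pipelining of $\mathcal{C}$ and $\mathcal{K}$, which is immediate, but the careful verification that every stage of the Fortnow--Santhanam / Dell--van Melkebeek argument relativizes cleanly. In particular, one must confirm that the equivalence-class bookkeeping induced by $R$ can be performed by the oracle machine, that Adleman's advice-fixing step remains insensitive to the presence of the oracle, and that the final verifier for $\bar L$ indeed lies in \SigmaPtwo/\polyadvice rather than leaking an extra quantifier alternation because of oracle queries made by the distillation. Each of these checks is conceptually routine, but each must be spelled out to make the relativization claim rigorous.
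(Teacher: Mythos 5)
Your proposal follows essentially the same route as the paper: pipeline the \PNP-OR-cross-composition with the hypothetical polynomial \OKernel to obtain a polynomially bounded \PNP-OR-distillation, then invoke a relativized distillation-impossibility argument (the paper's \cref{prop:dist}, proved via the deterministic covering lemma from \citet{Fom+19} rather than the probabilistic Adleman-style advice selection you describe, though both yield the same polynomial advice) to place $\bar L$ in \SigmaPtwo/\polyadvice and derive the collapse. The one point you wave off as mild --- the $R$-equivalence-class bookkeeping --- is exactly where the paper introduces the OR-target language $Q^{\text{OR}}$ and explicitly partitions the input tuple into classes, composes and compresses per class, and packs the results into a single $Q^{\text{OR}}$ instance, so that the resulting distillation is actually well-defined on arbitrary $t(n)$-tuples of length-$n$ strings (not just $R$-equivalent ones), and your argument would need to incorporate this step rather than dismiss it.
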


We prove \cref{thm:okernellowerbound} by adapting the proof techniques used for the classical OR-cross-composition result~\cite{Bod+09,FS11,BJK14,Fom+19,DellM14} (see \cref{sec:prelims}) in a straightforward way. We follow the presentation of the proof by \citet{Fom+19}. To this end, we introduce some additional concepts, particularly the one of a \PNP-OR-distillation. Let $L,R\subseteq \{0,1\}^*$ be two problems and let $t:\mathbb{N}\rightarrow\mathbb{N}$ be some computable function. Then a \emph{$t$-bounded \PNP-OR-distillation} is an algorithm that for every $n$, given $t(n)$ strings $x_1,\ldots,x_{t(n)}$ of length $n$ as input,
\begin{itemize}
    \item runs in \PNP-time, and 
    \item outputs a string $y$ of length at most $t(n)\cdot \log n$ such that $y\in R$ if and only if $x_i\in L$ for some $i\in[t(n)]$.
\end{itemize}
We first show that the existence of a polynomially bounded \PNP-OR-distillation for an \SigmaPtwo-hard problem implies \NoOKernelAssume.
\begin{proposition}\label{prop:dist}
    If there is a $t$-bounded \PNP-OR-distillation algorithm $A$ from a problem $L\subseteq \{0,1\}^*$ into a problem $R\subseteq \{0,1\}^*$ for some polynomially bounded function $t$, then $\Bar{L}\in \mathsf{\Sigma_2^\text{P}}/\textsf{poly}$.\footnote{We use $\Bar{L}$ to denote the complement of $L$, that is, $\Bar{L}=\{x\in\{0,1\}^*\mid x\notin L\}$.} In particular, if $L$ is \SigmaPtwo-hard, then \NoOKernelAssume.
\end{proposition}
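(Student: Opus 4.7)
The plan is to adapt the classical OR-distillation argument of Fortnow--Santhanam, as presented in~\citet{Fom+19}, from the \NP-setting to the \SigmaPtwo-setting, in direct parallel to how the \PNP-OR-cross-composition definition lifts the classical OR-cross-composition. The classical statement is that a polynomial-time $t$-bounded OR-distillation from~$L$ to~$R$ (for polynomially bounded~$t$) yields $\Bar{L}\in\NP/\polyadvice$. Here every step that was ``polynomial-time'' in the classical argument is allowed oracle access to~\NP, so each such step climbs one level of the polynomial hierarchy, and the target containment becomes $\Bar{L}\in\SigmaPtwo/\polyadvice$.

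I would first isolate the key combinatorial ingredient of the classical proof. Because a $t(n)$-tuple of length-$n$ instances is mapped by~$A$ into an output of length at most $t(n)\log n$, there are only $2^{t(n)\log n}=n^{t(n)}$ possible outputs, yet each such output must determine the \textsc{or} of $t(n)$ membership queries to~$L$. This ``bottleneck'' is precisely what powers the classical proof; I follow it verbatim to extract, for each input length~$n$, a polynomial-size advice~$W_n$ from which short no-certificates for any $x\in\Bar{L}\cap\{0,1\}^n$ can be produced by placing $x$ into suitable tuples built from~$W_n$ and running~$A$. The existence of~$W_n$ is established by a counting/covering argument that depends only on the input-output behavior of~$A$ (its image-size bound together with the OR-distillation property), so it is insensitive to whether~$A$ makes \NP-oracle calls internally.

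With~$W_n$ as advice, the \SigmaPtwo/\polyadvice-machine for~$\Bar{L}$ on input $x\in\{0,1\}^n$ proceeds as follows: existentially guess the short certificate prescribed by the classical proof (a placement of~$x$ into a tuple formed from~$x$ and entries of~$W_n$), run the \PNP-distillation~$A$ on the guessed tuple (each \NP-oracle call sits inside the ambient \SigmaPtwo-computation), and perform the same \coNP-style verification on the resulting distillation output as in the classical proof. The overall template ``$\exists$-guess, then deterministic polynomial computation with an \NP-oracle, then $\forall$-verification'' is captured exactly by~\SigmaPtwo, so the machine decides~$\Bar{L}$ in~\SigmaPtwo with polynomial advice, yielding $\Bar{L}\in\SigmaPtwo/\polyadvice$.

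The main obstacle is verifying that the classical covering argument transfers verbatim in the presence of an \NP-oracle inside~$A$; this is handled by observing that the combinatorial counting depends only on the image size of~$A$ and on the OR-distillation property, neither of which is affected by the internal use of oracle calls. The ``in particular'' consequence then follows immediately: if~$L$ is \SigmaPtwo-hard under polynomial-time many-one reductions, then~$\Bar{L}$ is \PiPtwo-hard, and since \SigmaPtwo/\polyadvice is closed under polynomial-time many-one reductions (the reducer composes with the \SigmaPtwo-machine, and the advice is adjusted accordingly), we conclude $\PiPtwo\subseteq\SigmaPtwo/\polyadvice$, i.e.\ $\NoOKernelAssume$.
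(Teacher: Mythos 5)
Your proposal is correct and takes essentially the same approach as the paper: both lift the classical Fortnow--Santhanam covering argument (as presented by Fomin et al.) to the oracle setting, using the covering lemma to build a polynomial-size advice set $S_n$ of outputs and a \SigmaPtwo/\polyadvice-machine that guesses a $t(n)$-tuple containing $x$, runs $A$, and checks the output against the advice, while observing that the covering/counting argument depends only on the input--output behavior of $A$ and so is unaffected by the oracle calls made inside $A$. One small clarification: in the paper the guessed tuple consists of arbitrary length-$n$ strings rather than entries of the advice set, and the final step is a plain deterministic membership check of $A$'s output against the advice set $S_n$ rather than a \coNP-style verification, but neither imprecision affects the conclusion since the entire computation still sits inside \SigmaPtwo/\polyadvice.
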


The following known lemma is the main tool to construct the advice strings for the proof of \cref{prop:dist}.
\begin{lemma}[\cite{Fom+19}]\label{lem:aux}
    Let $X,Y$ be finite sets, $p$ a natural number, and $\beta:X^p\rightarrow Y$ be a mapping. We say that $y\in Y$ covers $x\in X$ if there exist $x_1,\ldots,x_p\in X$ such that $x_i=x$ for some $i\in[p]$, and $\beta((x_1,\ldots,x_p))=y$. Then at least one element from $Y$ covers at least $|X|/\sqrt[p]{|Y|}$ elements of $X$.
\end{lemma}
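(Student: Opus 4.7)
The plan is to prove the covering lemma by a double-counting/averaging argument on the tuples in $X^p$. For each $y \in Y$, let $C(y) \subseteq X$ denote the set of elements that $y$ covers, i.e., $C(y) = \{x \in X : \exists (x_1,\ldots,x_p) \in X^p \text{ with } \beta(x_1,\ldots,x_p) = y \text{ and } x_i = x \text{ for some } i\}$. The goal is to show $\max_{y \in Y} |C(y)| \ge |X|/|Y|^{1/p}$.

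First I would make the following key observation: if $\beta(x_1,\ldots,x_p) = y$, then by definition every coordinate $x_i$ lies in $C(y)$. Therefore, for each $y \in Y$, the preimage $\beta^{-1}(y) \subseteq X^p$ is contained in the product set $C(y)^p$. This gives the bound $|\beta^{-1}(y)| \le |C(y)|^p$.

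Second, I would sum this inequality over all $y \in Y$. Since the preimages $\{\beta^{-1}(y)\}_{y \in Y}$ partition $X^p$, we obtain
\[
|X|^p \;=\; \sum_{y \in Y} |\beta^{-1}(y)| \;\le\; \sum_{y \in Y} |C(y)|^p.
\]
By averaging, there exists some $y^* \in Y$ with $|C(y^*)|^p \ge |X|^p/|Y|$, hence $|C(y^*)| \ge |X|/|Y|^{1/p}$, which is exactly the claim.

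There is no real obstacle here: the only subtlety is to ensure that $C(y)^p$ genuinely contains $\beta^{-1}(y)$, which follows directly from the definition of ``covers'' (every coordinate of a tuple mapping to $y$ is covered by $y$). The rest is a one-line averaging argument, so the writeup should be short and clean.
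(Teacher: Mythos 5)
Your argument is correct: the containment $\beta^{-1}(y)\subseteq C(y)^p$ holds by the definition of covering, and the averaging step over the partition $\{\beta^{-1}(y)\}_{y\in Y}$ of $X^p$ immediately yields some $y^*$ with $|C(y^*)|\ge |X|/\sqrt[p]{|Y|}$. The paper itself imports this lemma from the cited kernelization textbook without reproving it, and your double-counting argument is exactly the standard proof given there (stated there in contrapositive form: if every $y$ covered fewer than $|X|/\sqrt[p]{|Y|}$ elements, the preimages could not exhaust all $|X|^p$ tuples), so nothing further is needed.
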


Now we prove \cref{prop:dist}.

\begin{proof}[Proof of \cref{prop:dist}]
    For each $n\in\mathbb{N}$, let $\Bar{L}_n=\{x\in\Bar{L}\mid |x|=n\}$ and let $\alpha(n)=t(n)\cdot\log n$. Then for each input $x_1,\ldots,x_{t(n)}$ such that for all $i\in[t(n)]$ we have $x_i\in\Bar{L}_n$, algorithm $A$ outputs a string $y$ of length at most $\alpha(n)$ such that $y\in\Bar{R}$. Let $\Bar{R}_{\alpha(n)}=\{y\in\Bar{R}\mid |y|\le\alpha(n)\}$. Thus we have that algorithm $A$ maps each $t(n)$-tuple $x_1,\ldots,x_{t(n)}$ with $x_i\in\Bar{L}_n$ for all $i\in[t(n)]$ to some $y\in\Bar{R}_{\alpha(n)}$.
    It is known that by repeated application of \cref{lem:aux}, we can construct a set $S_n\subseteq \Bar{R}_{\alpha(n)}$ of size $t(n)^{\OO(1)}$ that covers every element of $\Bar{L}_n$~\cite{Fom+19}. For every string $x$ of length $n$, by construction of $S_n$ we have the following.
    \begin{itemize}
        \item If $x\in\Bar{L}_n$, then there is a $t(n)$-tuple $x_1,\ldots,x_{t(n)}$ of strings of length $n$ with $x_i=x$ for some $i\in[t(n)]$, such that on input $x_1,\ldots,x_{t(n)}$ the algorithm $A$ outputs some $y\in S_n$.
        \item If $x\notin \Bar{L}_n$, then by construction we have that $x\in L$. Then for every $t(n)$-tuple $x_1,\ldots,x_{t(n)}$ of strings of length $n$ with $x_i=x$ for some $i\in[t(n)]$, the algorithm $A$ outputs a string from $R$ on input $x_1,\ldots,x_{t(n)}$, and thus not a string from $S_n\subseteq\Bar{R}$. 
    \end{itemize}
    Now we construct a \SigmaPtwo/\polyadvice machine $M$ that decides the language $\Bar{L}$ as follows. For $x\in\{0,1\}^n$ as input, the machine $M$ takes the set $S_n$ as advice. First, the machine non-deterministically guesses a $t(n)$-tuple $x_1,\ldots,x_{t(n)}$ of strings of length $n$ with $x_i=x$ for some $i\in[t(n)]$. Then $M$ runs algorithm $A$ (in \PNP-time) on input $x_1,\ldots,x_{t(n)}$. If $A$ outputs a string from $S_n$, then $M$ accepts. Otherwise, $M$ rejects. This completes the description of $M$.

    It is clear that $M$ is a \SigmaPtwo/\polyadvice-machine, since it makes polynomially many non-deterministic guesses and runs in \PNP-time. By the above two properties, we have that $x\in\Bar{L}$ if and only if $M$ accepts $x$, that is, there is a computation path that accepts $x$. Thus we have $\Bar{L}\in\mathsf{\Sigma_2^\text{P}}/\textsf{poly}$. Furthermore, if $L$ is \SigmaPtwo-hard, then $\Bar{L}$ is \PiPtwo-hard and every problem in \PiPtwo is reducible to $\Bar{L}$. Hence, we get \NoOKernelAssume. This completes the proof.
\end{proof}

Now we are ready to prove the main result of this section (\cref{thm:okernellowerbound}).

\begin{proof}[Proof of \cref{thm:okernellowerbound}]
    We prove a stronger statement, namely that if a \SigmaPtwo-hard problem~\(L\) \PNP-OR-cross-composes into a parameterized
problem~$P$, then~$P$ does not admit a polynomial \PNP-compression, unless \NoOKernelAssume. By \cref{obs:kernelcompression} this yields the result.

Suppose that $P$ has a polynomial \PNP-compression into some problem $Q$. We will construct a $t$-bounded \PNP-OR-distillation algorithm $A$ with polynomially bounded $t$ from $L$ into $Q^{\text{OR}}$, where $Q^{\text{OR}}$ is defined as follows. For each $n\in\mathbb{N}$, let $p_n:(\{0,1\}^*)^n\rightarrow \{0,1\}^*$ be a bijective function that combines $n$ bitstrings into one string such that $|p_n(x_1,\ldots,x_n)|=n+2\sum_{i=1}^n|x_i|$.\footnote{Note that this can be achieved e.g.\ by putting a zero in front of each bit of each string and using one as a separator symbol.} For each $n\in\mathbb{N}$ and each $n$-tuple $x_1,\ldots,x_n$ of strings we have that $p_n(x_1,\ldots,x_n)\in Q^{\text{OR}}$ if and only if $x_i\in Q$ for some $i\in[n]$. This together with \cref{prop:dist} will prove the theorem.

We start with some straightforward observations.
\begin{enumerate}
    \item Problem~\(L\) \PNP-OR-cross-composes into parameterized problem~$P$. Let $R$ be the polynomial \PNP-equivalence relation. There exists some $c_1\in \mathbb{N}$ such that given a set of $R$-equivalent instances $x_1,\ldots,x_t$ of $L$ the \PNP-OR-cross-composition produces an instance $(y,k)$ of $P$ with~$k\le (\max_{i\in[t]}|x_i|+\log(t))^{c_1}$.

    \item Problem $P$ has a polynomial \PNP-compression into problem $Q$. There exists some $c_2\in \mathbb{N}$ such that given an instance $(x,k)$ of $P$ the polynomial \PNP-compression produces an instance $y$ of $Q$ with~$|y|\le k^{c_2}$.
    \item Since $R$ is a polynomial \PNP-equivalence relation, there exists some $c_3\in \mathbb{N}$ such that every set of strings of length $n$ is partitioned into at most $n^{c_3}$ equivalence classes.
\end{enumerate}
We now construct a $t$-bounded \PNP-OR-distillation algorithm $A$ with $t(n)=2n^{2(c_1\cdot c_2+c_3)}$ from $L$ into~$Q^{\text{OR}}$. Assume we are given a $t(n)$-tuple $x_1,\ldots,x_{t(n)}$ of strings of length $n$ that are instances of~$L$. We partition the instances in \PNP-time into $R$-equivalence classes $X_1,\ldots,X_r$ with $r\le n^{c_3}$. We apply the \PNP-OR-cross-composition algorithm to each class $X_i$ with $i\in[r]$ to obtain instances $(y_i,k_i)$ with $i\in[r]$ of $P$. Note that we have $k_i\le (n+\log(t(n)))^{c_1}$ for each $i\in[r]$. 

Now we use the polynomial \PNP-compression from $P$ into problem $Q$ to produce instances $z_1,\ldots,z_r$ of $Q$ with $|z_i|\le k_i^{c_2}$ for each $i\in[r]$. Observe that $z=p_r(z_1,\ldots,z_r)$ is an instance of~$Q^{\text{OR}}$. The length of $z$ is at most $r+2\sum_{i=1}^r|z_i|$ which is at most $n^{c_3}+2(n+\log(t(n)))^{c_1\cdot c_2}$. From here it is straightforward to verify that $|z|\le t(n)$. Thus we have constructed a $t$-bounded \PNP-OR-distillation algorithm $A$ from $L$ into~$Q^{\text{OR}}$ with polynomially bounded $t$. This together with \cref{prop:dist} completes the proof.
\end{proof}

We conjecture that analogously defined \PNP-AND-cross-compositions can also be used to refute polynomial \OKernel{s} based on the same assumptions, but leave this for future research. We expect this to be much harder to prove, similarly to the classical AND-cross-compositions~\cite{Dru15,Dell16}.


\section{Application to {\boldmath $\exists\forall$}-\textsc{Satisfiability}}
\label{sec:existsforallsat}

In this section, we apply our framework to the canonical \SigmaPtwo-complete problem $\exists\forall$-\textsc{Satisfiability}, which is defined as follows.

\problemdef{{\boldmath $\exists\forall$}-\textsc{Satisfiability} ({\boldmath $\exists\forall$}-DNF)}{A Boolean formula $\phi$ in disjunctive normal form, where the variables of $\phi$ are partitioned into two sets $X$ and $Y$.}{Does there exist an assignment for the variables in $X$ such that for all assignments for the variables in $Y$, the formula $\phi$ evaluates to \truevalue?}

\paragraph{Parameterizations.} This problem has several natural parameters that we will consider: 
\begin{itemize}
    \item the number $|X|$ of existentially quantified variables, 
    \item the number $|Y|$ of universally quantified variables, 
    \item the total number of variables $|X|+|Y|$, and
    \item the ``size of existential subformula''.
\end{itemize}
 The latter intuitively measures the size of the subformula that involves existentially quantified variables. Formally, given an instance~$\phi$ of $\exists\forall$-DNF, let $C$ be the set of all clauses that contain existentially quantified variables. Now exhaustively add to $C$ all clauses that only contain universally quantified variables where at least one of those variables also appears in another clause in $C$. Let~$\Bar{C}$ be the remaining clauses. Clearly, the clauses in $\Bar{C}$ only contain universally quantified variables, and no variable that appears in a clause in $\Bar{C}$ also appears in a clause in $C$. Now let $\phi_1$ be the subformula of $\phi$ composed of all clauses in $C$ and let $\phi_2$ be the subformula composed of all clauses in $\Bar{C}$. W.l.o.g.\ we have that $\phi=\phi_1\vee\phi_2$. Now we define the \emph{size of existential subformula} as $|\phi_1|=\sum_{c\in C}|c|$, where for some clause $c$ we denote by $|c|$ the number of variables appearing in $c$. Consider the following example:
 \[
 \phi=(x_1\wedge \neg x_2 \wedge y_1) \vee (\neg y_1 \wedge y_2) \vee (y_3 \wedge \neg y_4), 
 \]
 where $X=\{x_1,x_2\}$ and $Y=\{y_1,y_2,y_3,y_4\}$.
 Here, we first add $(x_1\wedge \neg x_2 \wedge y_1)$ to $C$ since it contains existentially quantified variables. Then we add $(\neg y_1 \wedge y_2)$ to $C$ since it only contains universally quantified variables where at least one of those variables also appears in another clause in $C$, in this case $y_1$. The remaining clause $(y_3 \wedge \neg y_4)$ also contains universally quantified variables but none of them also appears in another clause in $C$, hence this clause is in $\Bar{C}$. For this example, we get $\phi_1=(x_1\wedge \neg x_2 \wedge y_1) \vee (\neg y_1 \wedge y_2)$ and $\phi_2=(y_3 \wedge \neg y_4)$.
 Note that the size of the existential subformula is always at least the number of existentially quantified variables, that is, $|X|\le|\phi_1|$. 
 
 For the mentioned parameterizations, we can straightforwardly obtain the following results.

\begin{theorem}\label{thm:existforallhardness}
    The problem $\exists\forall$-DNF is
    \begin{itemize}
        \item \SigmaPtwo-hard,
        \item \paracoNP-hard when parameterized by the size of the existential subformula, and
        \item in \FPTNPfew when parameterized by the number of existentially quantified variables.
    \end{itemize}
\end{theorem}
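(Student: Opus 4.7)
The three bullets admit short, essentially independent arguments, and my plan is to handle them in order.

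The \SigmaPtwo-hardness is classical: \cref{sec:prelims} already records that $\exists\forall$-DNF is the canonical \SigmaPtwo-complete problem, so no new argument is needed.

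For \paracoNP-hardness when parameterized by the size of the existential subformula, the plan is to exhibit a \coNP-hard slice in which the parameter is a fixed constant. Specifically, I would reduce from DNF-\textsc{Tautology}, which is \coNP-complete (being, via De Morgan, the complement of CNF-\textsc{Satisfiability}). Given a DNF $\psi$ over variables $Z$, I form the $\exists\forall$-DNF instance $(\phi,X,Y):=(\psi,\emptyset,Z)$. Because $X=\emptyset$, no clause of $\phi$ contains an existentially quantified variable, so the set $C$ appearing in the definition of the existential subformula is initialised empty and the inductive enlargement rule cannot add anything. Hence $\phi_1$ is empty and $|\phi_1|=0$. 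The constructed instance is a \yes-instance if and only if, for every assignment to $Z$, $\psi$ is true, i.e., $\psi$ is a tautology. This yields \coNP-hardness at the constant parameter value $0$, hence \paracoNP-hardness.

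For \FPTNPfew-membership when parameterized by $|X|$, the plan is brute-force enumeration over $X$-assignments, paying a single \NP-oracle query per assignment. Explicitly, for each of the $2^{|X|}$ assignments $\alpha\colon X\to\{\truevalue,\falsevalue\}$, substitute $\alpha$ into $\phi$ to obtain a DNF $\phi_\alpha$ over $Y$ only, and ask the oracle whether the CNF $\neg \phi_\alpha$ is satisfiable; the CNF is produced in polynomial time by applying De Morgan's laws clause by clause, so this is a bona fide \NP query. Accept if some query returns \no (equivalently, $\phi_\alpha$ is a tautology over $Y$, so $\alpha$ witnesses the outer existential quantifier); reject if all queries return \yes. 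The total number of oracle calls is $2^{|X|}$, a function of the parameter only, so this places the problem in \FPTNPfew.

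The only conceptual subtlety across the whole statement is in the \FPTNPfew step: the natural ``for-all-$Y$'' question under a fixed $\alpha$ is a \coNP query, so one must dualize it to CNF-satisfiability of $\neg\phi_\alpha$ to turn it into an \NP query usable against an \NP-oracle. The \paracoNP step, by contrast, is a one-line reduction once one unpacks the layered definition of $\phi_1$ and notices that $X=\emptyset$ forces the existential subformula to be empty.
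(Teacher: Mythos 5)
Your proposal is correct and follows essentially the same route as the paper: the \SigmaPtwo-hardness is cited as classical, the \paracoNP-hardness is established by observing that the $X=\emptyset$ slice (parameter value $0$) is exactly DNF-\textsc{Tautology}, and membership in \FPTNPfew is shown by enumerating the $2^{|X|}$ assignments to the existential variables and issuing one oracle query per assignment. The only cosmetic difference is that you explicitly dualize $\phi_\alpha$ to a CNF-satisfiability query, whereas the paper simply notes in a footnote that an \NP-oracle answers complement queries for free; these are the same thing.
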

\begin{proof}
    The first statement of the theorem is a well-known fact~\cite{AB09,Sto76}. The second statement follows from the observation that if the size of the existential subformula is zero, the problem is equivalent to \textsc{Tautology}\footnote{In \textsc{Tautology}, we are given a Boolean formula in disjunctive normal form, and are asked whether this formula evaluates to \truevalue for all assignments for the variables.}, which is a well-known \coNP-complete problem~\cite{Karp72}. 

    For the third statement, we give the following straightforward algorithm. We enumerate all assignments for the existentially quantified variables. For each one, we first substitute the existentially quantified variables in $\phi$ with the respective truth values of the assignment. Note that the remaining formula is a \textsc{Tautology} instance. We use the \NP-oracle to check whether the remaining \textsc{Tautology} instance is a \yes-instance.\footnote{Note that since oracles can also decide complement languages, we can use the \NP-oracle to solve \coNP-complete problems.} If we find an assignment for the existentially quantified variables such that the corresponding \textsc{Tautology} instance is a \yes-instance, we answer \yes. Otherwise, we answer \no. It is straightforward to verify that the described algorithm is correct and runs in \FPTNPfew-time.
\end{proof}
Notably, we leave the parameterized complexity of $\exists\forall$-DNF with respect to the number of universally quantified variables open.

\paragraph{Kernelization.} From \cref{thm:existforallhardness} the natural question arises whether $\exists\forall$-DNF parameterized by the number of existentially quantified variables admits a polynomial \OKernel. In the following, we answer this question negatively.

\begin{theorem}\label{thm:satnokernel}
    The problem $\exists\forall$-DNF parameterized by the number of variables does not admit a polynomial \OKernel unless \NoOKernelAssume.
\end{theorem}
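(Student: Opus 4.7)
The plan is to apply the framework of \cref{thm:okernellowerbound}: I will give a (plain) \PNP-OR-cross-composition from the \SigmaPtwo-hard problem $\exists\forall$-DNF into itself parameterized by the total number of variables $|X|+|Y|$. In fact, no oracle call is needed, so a classical OR-cross-composition suffices.

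The polynomial \PNP-equivalence relation $R$ will declare two instances of $\exists\forall$-DNF equivalent iff they have the same number $x$ of existentially quantified variables and the same number $y$ of universally quantified variables; this gives at most $n^2$ classes on inputs of size at most $n$, and pairwise comparison is clearly polynomial. Given $t$ $R$-equivalent instances $\phi_1,\ldots,\phi_t$ on existential variables $X_1,\ldots,X_t$ and universal variables $Y_1,\ldots,Y_t$, first pad so that $t$ is a power of $2$ by appending copies of a trivial no-instance (e.g.\ the single clause $x\wedge\neg x$). Then rename variables so that all $X_i$ coincide with a single set $X$ of size $x$ and all $Y_i$ coincide with a single set $Y$ of size $y$. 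Introduce $m=\log t$ fresh existential selector variables $s_1,\ldots,s_m$, and for each $j\in[t]$ let $\sigma_j(s)$ be the conjunction of literals over $s_1,\ldots,s_m$ that is satisfied exactly by the assignment encoding $j$ in binary. Finally, build the DNF formula
\[
\phi \;=\; \bigvee_{j=1}^{t}\; \bigvee_{C\in\phi_j}\; \bigl(C\wedge \sigma_j(s)\bigr),
\]
with existentially quantified set $X\cup\{s_1,\ldots,s_m\}$ and universally quantified set $Y$. This construction is carried out in polynomial time, without oracle calls, and the new parameter is $x+y+\lceil\log t\rceil$, which is polynomially bounded in $\max_{i\in[t]}|\phi_i|+\log t$, as required.

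For correctness, if some $\phi_{i^\star}$ is a yes-instance witnessed by $X^\star$, then setting the selector bits to encode $i^\star$ and taking $X=X^\star$ works: for every $Y$, some clause $C$ of $\phi_{i^\star}$ is satisfied, so $C\wedge\sigma_{i^\star}(s)$ is a satisfied clause of $\phi$. Conversely, if $\phi$ is a yes-instance, fix a witnessing assignment $X^\star,s^\star$; decoding $s^\star$ yields an index $j^\star\in[t]$ (here is where the padding to a power of $2$ matters, so that every $s^\star$ encodes a valid index). For any $Y$, the satisfied clause must be of the form $C\wedge\sigma_{j}(s)$ with $j=j^\star$, since $\sigma_{j}(s^\star)$ is false for $j\neq j^\star$. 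Hence $\phi_{j^\star}$ is satisfied on $(X^\star,Y)$ for every $Y$, so $\phi_{j^\star}$ is a yes-instance; if $j^\star$ is a padding index the satisfied clause cannot exist, so this case never occurs. Combining this with the \SigmaPtwo-hardness of $\exists\forall$-DNF (\cref{thm:existforallhardness}) and \cref{thm:okernellowerbound} gives the claimed lower bound.

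The only potentially subtle point is the soundness direction, namely ensuring the selector bits uniquely pin down one input instance; this is handled by padding $t$ to a power of $2$ so that $\sigma_j(s)$ partitions the selector assignments and any satisfied clause of $\phi$ must come from exactly one $\phi_j$. Everything else—the equivalence relation, the variable renaming that relies on $R$-equivalence, and the parameter bookkeeping—is routine.
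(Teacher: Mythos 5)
Your proof is correct and follows essentially the same route as the paper: the same equivalence relation (matching numbers of existential and universal variables), the same variable renaming, the same $\log t$ existential selector variables appended to every clause via conjunction with the binary encoding of the instance index, and the same OR of the modified formulas, followed by an invocation of \cref{thm:okernellowerbound}. The only difference is that you spell out the padding to a power of two with trivial no-instances and explain why a decoded padding index cannot arise in the soundness direction, whereas the paper simply says ``w.l.o.g.\ $t$ is a power of two''; this is a minor (and welcome) added detail rather than a different approach.
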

\begin{proof}
    To prove the theorem, we give a \PNP-OR-cross-composition from $\exists\forall$-DNF (unparameterized) into $\exists\forall$-DNF parameterized by the number of variables. Since $\exists\forall$-DNF is \SigmaPtwo-hard~\cite{AB09,Sto76},  by \cref{thm:okernellowerbound} this yields the result. 

    First of all, we consider two instances of $\exists\forall$-DNF $R$-equivalent if they have the same number of existentially quantified variables and the same number of universally quantified variables. Clearly, we have that $R$ is a polynomial \PNP-equivalence relation.

    Now assume we are given $t$ $R$-equivalent instances $\phi_1,\ldots,\phi_t$ of $\exists\forall$-DNF that each have $n_1$ existentially quantified variables and $n_2$ universally quantified variables. Let $X_i$ denote the set of existentially quantified variables of $\phi_i$ and let $Y_i$ denote the set of universally quantified variables of $\phi_i$, for all $i\in[t]$. Assume w.l.o.g.\ that $t$ is a power of two. Assume all existentially quantified variables are ordered in some fixed but arbitrary way. We identify the $j$th existentially quantified variable of each instance $\phi_i$ with $x_j$ for all $i\in[t]$ and $j\in[n_1]$. Analogously, assume all universally quantified variables are ordered in some fixed but arbitrary way. We identify the $j$th universally quantified variable of each instance $\phi_i$ with $y_j$ for all $i\in[t]$ and $j\in[n_2]$. Hence, from now on we assume that all instances have the same set $X$ of existentially quantified variables and the same set~$Y$ of universally quantified variables.
    
    We create an instance $\phi^\star$ as follows. First, we add $\log t$ fresh variables $z_1,\ldots,z_{\log t}$ to the set $X$ of existentially quantified variables. We call those the ``instance selection variables''. 
    For each $i\in[t]$ we create an instance $\phi_i'$ from $\phi_i$ by adding $\ell_1\wedge\ldots\wedge\ell_{\log t}$ to each clause of $\phi_i$, where $\ell_j=z_j$ if the $j$th bit of the binary representation of $i$ is one, and $\ell_j= \neg z_j$ otherwise. Here, we start counting at the least significant bit.
    Intuitively, each assignment for the instance selection variables deactivates all but one instance. We give an example in \cref{fig:satexample}.
    Now we set $\phi^\star=\phi'_1\vee\ldots\vee\phi'_t$. The instance~$\phi^\star$ can clearly be computed in \PNP-time.

\begin{figure}[t]
    \centering
    \[
    \phi_1 = a_1\vee a_2, \ \ \ \ \phi_2 = b_1\vee b_2, \ \ \ \ \phi_3 = c_1\vee c_2, \ \ \ \ \phi_4 = d_1\vee d_2
    \]

\vspace{-3ex}
    
    \[
    \phi'_1 = (a_1\wedge z_1 \wedge \neg z_2)\vee (a_2\wedge z_1 \wedge \neg z_2), \ \ \ \ \phi'_2 = (b_1\wedge \neg z_1 \wedge z_2)\vee (b_2\wedge \neg z_1 \wedge z_2),
    \]
    \[
    \phi'_3 = (c_1\wedge z_1 \wedge z_2)\vee (c_2\wedge z_1 \wedge z_2), \ \ \ \ \phi'_4 = (d_1\wedge \neg z_1 \wedge \neg z_2)\vee (d_2\wedge \neg z_1 \wedge \neg z_2)
    \]

\vspace{-3ex}

    \[
\phi^\star=\phi'_1\vee\phi'_2\vee\phi'_3\vee\phi'_4
    \]
    \caption{Example of the \PNP-OR-cross-composition presented in the proof of \cref{thm:satnokernel}. The top row shows four example input instances. Here, $a_1$ and $a_2$ are two clauses of $\phi_1$. Analogously, $b_1,b_2$ and $c_1,c_2$ and $d_1,d_2$ are two clauses of $\phi_2$ and $\phi_3$ and $\phi_4$, respectively. The middle two rows show how the input instances are modified by adding two extra literals involving variables $z_1,z_2$ to each clause. The bottom row shows the output instance.}
    \label{fig:satexample}
\end{figure}

    First, note that $\phi^\star$ is an instance of $\exists\forall$-DNF and the number of variables of $\phi^\star$ is $n=n_1+n_2+\log t$, which is clearly polynomially upper-bounded by $\max_{u\in[t]}|\phi_i|+\log t$. In the remainder, we show that $(\phi^\star,n)$ is a \yes-instance of $\exists\forall$-DNF parameterized by the number of variables if and only if there is some $i\in[t]$ such that $\phi_i$ is a \yes-instance of $\exists\forall$-DNF.

    $(\Rightarrow)$: Assume that $(\phi^\star,n)$ is a \yes-instance of $\exists\forall$-DNF parameterized by the number of variables. Then there is an assignment for the existentially quantified variables of $\phi^\star$ such that for all assignments of the universally quantified variables of $\phi^\star$, the formula $\phi^\star$ evaluates to \truevalue. Assume we have such an assignment for the existentially quantified variables. Consider the assignment for the instance selection variables $z_1,\ldots,z_{\log t}$ (which are existentially quantified). Let the binary representation of $i$ correspond to this assignment, that is, the $j$th bit of the binary representation of $i$ is one if and only if $z_j$ is set to \truevalue. Now consider some $i'$ and the subformula $\phi'_{i'}$ of $\phi^\star$. The formula $\phi'_{i'}$ was created from $\phi_i$ by adding $\ell_1\wedge\ldots\wedge\ell_{\log t}$ to each clause of $\phi_{i'}$, where $\ell_j=z_j$ if the $j$th bit of the binary representation of $i'$ is one, and $\ell_j= \neg z_j$ otherwise. We have that $\ell_1\wedge\ldots\wedge\ell_{\log t}$ evaluates to \truevalue if and only if $i'=i$. In other words, each clause of each subformula $\phi'_{i'}$ with $i'\neq i$ evaluates to \falsevalue for each assignment of the universally quantified variables. It follows that for each assignment of the universally quantified variables, at least one clause of $\phi_i'$ evaluates to \truevalue. This implies that $\phi_i$ is a \yes-instance of $\exists\forall$-DNF.

    $(\Leftarrow)$: Assume that $\phi_i$ with $i\in[t]$ is a \yes-instance of $\exists\forall$-DNF. Then there is an assignment for the existentially quantified variables of $\phi_i$ such that for all assignments of the universally quantified variables of $\phi_i$, the formula $\phi_i$ evaluates to \truevalue. Assume we have such an assignment for the existentially quantified variables. We construct an assignment for the existentially quantified variables of $\phi^\star$ as follows. For all existentially quantified variables that $\phi^\star$ has in common with $\phi_i$ we use the same assignment as for $\phi_i$. Note that the only existentially quantified variables that $\phi^\star$ does not have in common with $\phi_i$ are the instance selection variables $z_1,\ldots,z_{\log t}$. For each $j\in[\log t]$ we set $z_j$ to \truevalue if and only if the $j$th bit of the binary representation of $i$ is one. Note that then for each clause of the subformula $\phi_i'$ of $\phi^\star$ we have that it evaluates to \truevalue for some assignment for the universally quantified variables if and only if the corresponding clause in $\phi_i$ evaluates to \truevalue for the same assignment for the universally quantified variables. It follows that for each assignment for the universally quantified variables, at least one clause of $\phi_i'$ evaluates to \truevalue. This implies that that $(\phi^\star,n)$ is a \yes-instance of $\exists\forall$-DNF parameterized by the number of variables.
\end{proof}

However, as we show next, if we parameterize $\exists\forall$-DNF by the size of the existential subformula, which is a larger parameter than the number of existentially quantified variables and incomparable to the total number of variables, then we can straightforwardly obtain a polynomial \OKernel.

\begin{theorem}
    The problem $\exists\forall$-DNF admits a polynomial \OKernel when parameterized by the size of the existential subformula.
\end{theorem}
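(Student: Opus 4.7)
The plan is to exploit the \NP-oracle to eliminate the ``decoupled'' subformula $\phi_2$ and keep only $\phi_1$ as the kernel. Given an input $\phi$ with parameter $k = |\phi_1|$, first compute the decomposition $\phi = \phi_1 \vee \phi_2$ in polynomial time, by initializing $C$ with all clauses of $\phi$ that contain an $X$-variable and then iteratively adding every clause of $\phi$ that shares a variable with some clause already in $C$. By construction, the variable set $X \cup Y_1$ of $\phi_1$ (where $Y_1$ denotes the universally quantified variables appearing in $\phi_1$) is disjoint from the variable set $Y_2$ of $\phi_2$, and $\phi_2$ is a DNF formula purely over $Y_2$.

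Next, issue a single oracle query to decide whether $\phi_2$ is a tautology; this is the \coNP-complete problem \textsc{Tautology}, which an \NP-oracle handles. If the answer is \yes, then $\phi$ evaluates to \truevalue under every assignment, and the algorithm outputs a constant-size trivial \yes-instance. Otherwise, I claim that $\phi$ is a \yes-instance of $\exists\forall$-DNF if and only if $\phi_1$, viewed as an $\exists\forall$-DNF instance with existential variables $X$ and universal variables $Y_1$, is a \yes-instance. The ``if'' direction is immediate: if an assignment $\alpha$ to $X$ witnesses $\phi_1$, then for every $(\beta_1,\beta_2) \in \{0,1\}^{Y_1}\times\{0,1\}^{Y_2}$ the value $\phi_1(\alpha,\beta_1)$ is \truevalue, hence so is $\phi(\alpha,\beta_1,\beta_2) = \phi_1(\alpha,\beta_1) \vee \phi_2(\beta_2)$. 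For the ``only if'' direction, fix some $\beta_2^\star$ with $\phi_2(\beta_2^\star)$ evaluating to \falsevalue (such $\beta_2^\star$ exists by the oracle's answer); if $\alpha$ witnesses $\phi$, then for every $\beta_1$ the assignment $(\alpha,\beta_1,\beta_2^\star)$ satisfies $\phi$, which forces $\phi_1(\alpha,\beta_1)$ to be \truevalue, so $\alpha$ witnesses $\phi_1$.

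In this non-tautology case, output $(\phi_1,k')$ with $k' := |\phi_1|$. Because every clause of $\phi_1$ lies in $C$ by construction, the existential subformula of $\phi_1$ is $\phi_1$ itself, so $k' = |\phi_1|$ equals the original parameter $k$, and hence $|\phi_1| + k' \le 2k$, fitting the polynomial bound required by the definition of a polynomial \OKernel. There is no real obstacle in this argument: the crucial structural fact is the variable-disjointness between $\phi_1$ and $\phi_2$ guaranteed by the definition of the existential subformula, which enables the clean case split; the oracle is used solely for the tautology check on $\phi_2$, and all remaining work is straightforward polynomial-time bookkeeping.
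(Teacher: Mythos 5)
Your proof is correct and follows exactly the same approach as the paper's: compute the decomposition $\phi=\phi_1\vee\phi_2$, query the oracle on whether $\phi_2$ is a tautology, and output either a trivial yes-instance or $\phi_1$ accordingly. You merely spell out the equivalence argument (using a falsifying assignment $\beta_2^\star$ for $\phi_2$) that the paper declares ``straightforward to verify.''
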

\begin{proof}
     Let $(\phi^\star,k)$ be an instance of $\exists\forall$-DNF parameterized by the size of the existential subformula. W.l.o.g.\ let $\phi^\star=\phi_1\vee\phi_2$, where $|\phi_1|=k$, and $\phi_2$ does not contain any existentially quantified variables and every variable that appears in $\phi_2$ does not appear in $\phi_1$. Note that we can compute~$\phi_1$ and $\phi_2$ in a straightforward manner from $\phi^\star$ in polynomial time.

     Now we use the following \PNP-algorithm to produce an equivalent instance of polynomial size in $k$. We check (using the \NP-oracle) whether $\phi_2$ is a \yes-instance of \textsc{Tautology}. If it is, then we output a trivial \yes-instance of $\exists\forall$-DNF that has constant size. Otherwise, we output $\phi_1$. Clearly, the instance that is produced by the above-described algorithm has a size polynomial in $k$. It is straightforward to verify that the produced instance is equivalent to $\phi^\star$.
\end{proof}

\paragraph{Open Questions.} As already mentioned, we leave the parameterized complexity of $\exists\forall$-DNF with respect to the number of universally quantified variables open. We conjecture that we can obtain similar results for the \PiPtwo-complete problem $\forall\exists$-CNF. However, to refute polynomial \OKernel for $\forall\exists$-CNF parameterized by the number of universally quantified variables, we would (presumably) need a \PNP-AND-cross-composition.

\section{Application to \textsc{Clique-Free Vertex Deletion}}\label{sec:clique}
In this section, we apply our framework to \textsc{Clique-Free Vertex Deletion}, which is a natural generalization of \textsc{Vertex Cover} or \textsc{Triangle-Free Vertex Deletion}. The problem is known to be \SigmaPtwo-hard and defined as follows.

\problemdef{\textsc{Clique-Free Vertex Deletion}}{A graph $G=(V,E)$ and two integers $h$ and $k$.}{Does there exist a vertex set $V'\subseteq V$ with $|V'|\le h$ such that $G-V'$ does not contain a clique of size $k$?}

\paragraph{Parameterizations.}

 This problem has several natural parameters that we will consider:
\begin{itemize}
    \item the size $h$ of the deletion set,
    \item the size $k$ of the cliques that are to be deleted, and
    \item the number \#$k\text{-cliques}$ of cliques of size $k$ in $G$.
\end{itemize}
Note the latter can be exponentially larger than the input size, however, we will argue that the problem remains hard even if the number of $k$-cliques in the input graph $G$ is small. Note that we can assume w.l.o.g.\ that $h\le \#k\text{-cliques}$, since if $h>\#k\text{-cliques}$ we obtain a trivial \yes-instance: we can delete an arbitrary vertex from every clique of size $k$.
 For the mentioned parameterizations, we establish the following results.

\begin{theorem}\label{thm:cliquefree}
    The problem \textsc{Clique-Free Vertex Deletion} is 
    \begin{itemize}
        \item \SigmaPtwo-hard,
        \item \cowone-hard when parameterized by $k$ even if~$h=0$, 
        \item \cowone-hard (under randomized reductions) when parameterized by $k$ even if \#$k\text{-cliques}=1$, and
        \item in \FPTNPfew when parameterized by $(h+k)$.
    \end{itemize}
\end{theorem}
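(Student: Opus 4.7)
The first item, \SigmaPtwo-hardness, is stated as known in the problem description, so I would only sketch or cite the standard construction (a reduction from $\exists\forall$-\textsc{Satisfiability}). The second item follows directly from the observation that setting $h=0$ forces $V'=\emptyset$, so the problem reduces to deciding whether $G$ contains no clique of size~$k$, i.e., the complement of \textsc{Clique} parameterised by~$k$. Since \textsc{Clique} is \wone-complete, this immediately yields \cowone-hardness.

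The third item is the main technical obstacle. My plan is to compose the $h=0$ observation above with a Valiant--Vazirani-style isolation lemma applied to $k$-cliques. Given a \textsc{Clique} instance $(G,k)$, I would randomly sub-sample the vertex set of~$G$ via a family of pairwise-independent hash functions, chosen so that with inverse-polynomial probability the resulting subgraph $G'$ contains at most one $k$-clique and contains exactly one iff $G$ contained at least one. The reduction outputs $(G', h=0, k)$; by correctness of the isolation, a \no-answer corresponds to the presence of a $k$-clique in~$G'$ and hence in~$G$. This yields \cowone-hardness under randomised parameterised reductions whose output instances satisfy $\#k\text{-cliques}\le 1$. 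The delicate point is calibrating the hashing so that the isolation probability is non-negligible and the parameter remains~$k$; Valiant--Vazirani needs nontrivial adaptation from the SAT setting to the $k$-clique setting.

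For the fourth item, I use a bounded-depth branching algorithm equipped with the NP oracle. Starting from $V'=\emptyset$, at each recursive node I first query the oracle whether $G-V'$ still contains a $k$-clique; if the answer is \no, I return \yes; if $|V'|=h$ and the answer is \yes, I return \no; otherwise I use self-reducibility with the NP oracle to extract an explicit $k$-clique $C$ in~$G-V'$ and recurse on $V'\cup\{v\}$ for every $v\in C$. Since any hitting set of size~$\le h$ for the $k$-cliques of~$G$ must intersect~$C$, the branching is exhaustive; the recursion tree has depth at most~$h$ and branching factor at most~$k$, hence at most~$k^h$ leaves. A careful accounting of the oracle queries across the tree (including those used to extract explicit $k$-cliques via self-reducibility) shows that the total number of queries is bounded by a computable function of $h+k$, placing the problem in \FPTNPfew.
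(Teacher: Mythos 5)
Your treatment of items 1, 2, and 4 matches the paper's proof. Item 1 is cited as a known result; item 2 observes that $h=0$ collapses the problem to the complement of \textsc{Clique}, which is \wone-complete, exactly as in the paper; and item 4 is the same depth-$h$, branching-factor-$k$ search tree with NP-oracle queries at each node, giving a tree of size $\OO(k^h)$. The remark about using self-reducibility to extract an explicit $k$-clique from the oracle is a detail the paper glosses over, and it is fine.

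Item 3 is where you diverge, and where there is a genuine gap. The paper does not reprove the isolation step: it simply cites \citet{MontoyaM13}, who showed that \textsc{Clique} under the promise that the input graph has at most one $k$-clique remains \wone-hard under randomized parameterized reductions. You instead propose to re-derive this via a ``Valiant--Vazirani-style'' argument by randomly sub-sampling vertices with pairwise-independent hash functions. As sketched, this does not work: random deletion of vertices does not isolate a unique $k$-clique with inverse-polynomial probability when the number of $k$-cliques is large, and unlike the SAT case the set of $k$-cliques carries no linear structure over which one can impose random affine constraints to halve the solution count. The techniques that do work here are either the weight-based Isolation Lemma applied to the family of $k$-element cliques, or the more elaborate reduction constructed by Montoya and M\"uller themselves; you have not supplied either, and you acknowledge the missing step as ``nontrivial adaptation''---that adaptation is precisely the content of the cited result, and the paper's proof relies on the citation rather than re-deriving it.
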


\begin{proof}
The first statement of the theorem is a known result~\cite{rutenburg1994propositional}.
The second statement from \cref{thm:cliquefree} follows from the well-known fact that \textsc{Clique} parameterized by the solution size is \wone-hard~\cite{DowneyF95,DF13}. The third result from \cref{thm:cliquefree} follows from the fact that \textsc{Clique} parameterized by the solution size remains \wone-hard under randomized reductions even if the input instances are restricted to graphs $G$ that either contain exactly one clique of size $k$ or none~\cite{DF13,MontoyaM13}.\footnote{More precisely, \citet{MontoyaM13} showed that the \textsc{Clique} parameterized by the solution size $k$ with the ``promise'' that the input instance contains at most one clique of size $k$ is \wone-hard under randomized reductions.} 

In the remainder, we show the fourth result from \cref{thm:cliquefree}.
%
    To prove that \textsc{Clique-Free Vertex Deletion} is in \FPTNPfew when parameterized by $(h+k)$ we present a simple search-tree algorithm. The algorithm works as follows.
    \begin{itemize}
        \item Find a clique of size $k$ using the \NP-oracle. If no clique is found, answer \yes.
        \item Branch on which vertex of the clique to delete and decrease $h$ by one. If $h$ becomes negative, abort the current branch of the search-tree.
        \item If no branch of the search-tree yields the answer \yes, then answer \no.
    \end{itemize}
    The search-tree has depth $h$ since we can delete at most $h$ vertices. The branching-width is $k$ since we have $k$ options on which vertex to delete. Hence, the search-tree has size $\OO(k^h)$. In each node of the search-tree, we invoke the \NP-oracle once. Hence, the described algorithm runs in \FPTNPfew-time for parameter $(h+k)$.

    If the algorithm answers \yes, then we have deleted at most $h$ vertices such that the remaining graph does not contain a clique of size $k$. Hence, we face a \yes-instance. If we face a \yes-instance, then there is a solution that deletes at least one vertex from every clique of size $k$. Since the described algorithm exhaustively tries out all possibilities to delete vertices from cliques of size $k$, one of the branches of the search-tree leads to the solution and causes the algorithm to answer \yes.
\end{proof}

\paragraph{Kernelization.} \cref{thm:cliquefree} raises the natural question of whether \textsc{Clique-Free Vertex Deletion} admits a polynomial \OKernel when parameterized by $(k+h)$. While leaving this question open, we show that we can obtain a polynomial \OKernel for a larger parameterization.

\begin{theorem}
    The problem \textsc{Clique-Free Vertex Deletion} admits a polynomial \OKernel when parameterized by $(k+\#k\text{-cliques})$.
\end{theorem}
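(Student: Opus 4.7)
The plan is to reduce $(G,h,k)$ to the subgraph of $G$ induced by the vertices that lie in at least one $k$-clique, after first enumerating those cliques using the \NP-oracle. Write $r := k + \#k\text{-cliques}$ for the parameter, and recall that we may assume $h \le \#k\text{-cliques} \le r$ (otherwise the instance is trivially \yes).

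First, I would try to enumerate all $k$-cliques of $G$ one by one. The decision query ``does $G$ contain a $k$-clique not in the current list $L$?'' is in \NP, the witness being any such clique. Using the usual self-reduction that commits one vertex of the new clique at a time, each fresh $k$-clique can be extracted with $\OO(nk)$ calls to the oracle. To keep the whole procedure in \PNP, I would cap this enumeration at $T := |V|^2$ iterations. If the cap is reached, then $\#k\text{-cliques} > |V|^2$, so $|G| \in |V|^{\OO(1)} \le r^{\OO(1)}$, and I would simply output $(G,h,k)$ unchanged.

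Otherwise the loop terminates with the complete list $L = \{C_1,\dots,C_m\}$ of all $k$-cliques of $G$, with $m \le \#k\text{-cliques}$. Set $V^\star := \bigcup_{i=1}^{m} C_i$, so $|V^\star| \le km \le r^2$, and output the instance $(G[V^\star], h, k)$. For correctness, every $k$-clique of $G$ is contained in $V^\star$, hence $G$ and $G[V^\star]$ have exactly the same $k$-cliques; conversely, deleting any vertex outside $V^\star$ never hits a $k$-clique, so some optimal hitting set can be chosen inside $V^\star$. In both branches the output has size in $r^{\OO(1)}$, $\#k\text{-cliques}$ of the output does not exceed that of the input, and $h \le r$ is unchanged, so this genuinely is a polynomial \OKernel.

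The core idea — ``the only relevant vertices are those contained in some $k$-clique'' — is simple, and most steps are routine. The steps that deserve the most care are (i) the self-reduction that upgrades the decision query into the extraction of a specific, previously unseen $k$-clique while keeping the total number of oracle calls polynomial in $n$, and (ii) calibrating the threshold $T$ so that the ``many cliques'' branch trivially gives output polynomial in $r$ while the ``few cliques'' branch gives $|V^\star|$ polynomial in $r$. Neither looks like a serious obstacle, so I expect the proof to go through essentially as sketched.
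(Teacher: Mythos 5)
Your proof is correct and relies on the same core idea as the paper's: keep only the vertices that lie in some $k$-clique, yielding at most $k\cdot \#k\text{-cliques}\le r^{2}$ vertices. The paper implements this rule more directly — for each vertex $v$ it makes a single \NP-oracle query asking whether $v$ belongs to some $k$-clique and deletes $v$ if not — which sidesteps the clique-enumeration-with-cap machinery you build, but the argument and the resulting bound are essentially identical.
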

\begin{proof}
    To prove that \textsc{Clique-Free Vertex Deletion} parameterized by $(k+\#k\text{-cliques})$ admits a polynomial \OKernel we present the following reduction rule.
    \begin{itemize}
        \item If a vertex is not contained in any clique of size $k$, remove that vertex.
    \end{itemize}
    This reduction rule is clearly safe. Furthermore, it can be applied exhaustively in \PNP-time: We can use the \NP-oracle to check for some vertex $v$ whether this vertex is contained in a clique of size $k$ or not, since this problem is clearly contained in \NP. We check this for every vertex at most once. If it cannot be applied anymore, each vertex is included in at least one clique of size $k$. It follows that the overall number of vertices is at most $k\cdot\#k\text{-cliques}$. Hence, the algorithm created an equivalent instance with a size polynomial in $k+\#k\text{-cliques}$.
\end{proof}

We can obtain additional results for the following weighted version of \textsc{Clique-Free Vertex Deletion}.

\problemdef{\textsc{Weighted Clique-Free Vertex Deletion}}{A graph $G=(V,E)$ with (unarily encoded) vertex weights $w:V\rightarrow \mathbb{N}$ and two integers $h$ and $k$.}{Does there exists a vertex set $V'\subseteq V$ with $\sum_{v\in V'}w(v)\le h$ such that $G-V'$ does not contain a clique of weight $k$?}

Note that the proofs of our positive results for \textsc{Clique-Free Vertex Deletion} straightforwardly carry over to \textsc{Weighted Clique-Free Vertex Deletion}. Hence, we have the following.

\begin{proposition}
    The problem \textsc{Weighted Clique-Free Vertex Deletion} is in \FPTNPfew when parameterized by $(h+k)$, and admits a polynomial \OKernel when parameterized by $(k+\#\text{weight-}k\text{-cliques})$.
\end{proposition}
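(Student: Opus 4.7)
The plan is to verify that both proofs from \cref{thm:cliquefree} and its kernel counterpart transport essentially verbatim to the weighted setting, with a single extra bookkeeping step to control the unary weight encoding. Throughout I will assume (as is standard and as the unary encoding requires for meaningfulness) that vertex weights are positive integers, and I will interpret ``clique of weight $k$'' as a clique whose vertex weights sum to $k$, mirroring the unweighted ``clique of size $k$''.

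For the \FPTNPfew\ claim with parameter $(h+k)$, I would rerun the search-tree argument used in the proof of \cref{thm:cliquefree}. At each node, the algorithm queries the \NP-oracle for the existence of a weight-$k$ clique in the current graph; this is in \NP\ since a candidate clique can be verified in polynomial time against unarily encoded weights. If such a clique $K$ is found, the algorithm branches on which vertex of $K$ to delete. Because weights are positive, $|K|\le k$, so the branching width is at most $k$. Each deletion decreases the remaining budget $h$ by at least $w(v)\ge 1$, so the search-tree depth is at most $h$. This yields $\OO(k^h)$ nodes and one oracle call per node, which is an \FPTNPfew-time algorithm for the parameter $(h+k)$; correctness is identical to the unweighted case.

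For the polynomial \OKernel\ with parameter $(k+\#\text{weight-}k\text{-cliques})$, I would reuse the single reduction rule: delete any vertex that lies in no clique of weight $k$. The test ``is $v$ in a clique of weight $k$?'' is in \NP, so the rule can be applied exhaustively by a \PNP-algorithm in polynomially many oracle calls. Safety is immediate: such a vertex cannot help block any weight-$k$ clique, and deleting it creates no new cliques. After exhaustion, every surviving vertex sits in at least one weight-$k$ clique, and each such clique has at most $k$ vertices, so the graph has at most $k\cdot\#\text{weight-}k\text{-cliques}$ vertices. The only genuinely new step compared with the unweighted proof, and the one I would flag as the main obstacle to ``straightforwardness,'' is accounting for the unary weight encoding: a surviving vertex $v$ must satisfy $w(v)\le k$ (otherwise no weight-$k$ clique can contain it), so each vertex contributes at most $k$ to the unary weight size, and the overall instance size is polynomial in $k+\#\text{weight-}k\text{-cliques}$, as required.
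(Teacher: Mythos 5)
Your proof is correct and takes the same approach the paper intends: the paper gives no explicit argument for this proposition, merely asserting that the unweighted proofs ``straightforwardly carry over,'' and your two paragraphs supply exactly those details. In particular, your observation that the unary weight encoding must be controlled---and that this is handled because any vertex surviving the reduction rule has weight at most $k$, so the total instance size is $O(k^2\cdot\#\text{weight-}k\text{-cliques})$---is precisely the point that the paper glosses over, and it is good that you flagged it. One small caveat worth recording: your bound $w(v)\le k$ implicitly reads ``clique of weight $k$'' as weight \emph{exactly} $k$. Under the arguably more natural reading ``weight at least $k$'' (the direct weighted analogue of ``size at least $k$,'' which the paper never disambiguates), a single vertex of weight at least $k$ is itself a forbidden clique and would survive your rule, so you would need to prepend a further reduction rule (delete every vertex of weight at least $k$ and decrease $h$ by its weight, answering \no if $h$ becomes negative) before the weight bound on survivors holds; the remainder of the argument, including the $|K|\le k$ branching bound via minimal forbidden cliques, then goes through unchanged.
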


For \textsc{Weighted Clique-Free Vertex Deletion}, we can indeed rule out a polynomial \OKernel for the parameterization $(h+k)$.

\begin{theorem}\label{thm:nopk2}
    The problem \textsc{Weighted Clique-Free Vertex Deletion} parameterized by $(h+k)$ does not admit a polynomial \OKernel unless \NoOKernelAssume.
\end{theorem}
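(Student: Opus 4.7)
The strategy is to apply \cref{thm:okernellowerbound} by exhibiting a \PNP-OR-cross-composition from \textsc{Clique-Free Vertex Deletion}---which is \SigmaPtwo-hard by \cref{thm:cliquefree}---into \textsc{Weighted Clique-Free Vertex Deletion} parameterized by $(h+k)$. The polynomial \PNP-equivalence relation $R$ declares two instances $(G,h,k)$ and $(G',h',k')$ equivalent iff $h=h'$, $k=k'$, and $|V(G)|=|V(G')|$; this is clearly polynomial-time decidable, and hence a polynomial \PNP-equivalence relation.

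Given $t$ $R$-equivalent instances $(G_1,h,k),\ldots,(G_t,h,k)$ with $n=|V(G_j)|$ each, and WLOG $t=2^d$ with $d=\log t$, the plan is to construct a weighted graph $G^\star$ as follows. Start with the disjoint union $G_1\sqcup\cdots\sqcup G_t$ with every original vertex of weight~$1$, and add a compact \emph{instance-selector gadget} on $2d$ selector vertices $s_\ell^b$ ($\ell\in[d]$, $b\in\{0,1\}$) of polynomial weight~$M$ (with $M$ chosen large enough so that, e.g., $M>n$ and $M>h$), arranged in the cocktail-party graph $K_{d\times 2}$: within each pair $\{s_\ell^0,s_\ell^1\}$ the two selectors are non-adjacent, and selectors in different pairs are adjacent. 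Each vertex $v\in V(G_j)$ is made adjacent to exactly the $d$ selectors $T_j:=\{s_\ell^{j_\ell}:\ell\in[d]\}$ encoding the binary representation of~$j$. Set $k^\star=dM+k$ and $h^\star=dM+h$, so that $h^\star+k^\star=\Theta(dM)$ is polynomially bounded in $n+\log t$ as required.

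For the forward direction, assume some $(G_{j^\star},h,k)$ is a yes-instance with hitting set $H\subseteq V(G_{j^\star})$ of size at most $h$. Deleting $H$ together with the $d$ ``opposite'' selectors $\{s_\ell^{1-j^\star_\ell}:\ell\in[d]\}$ uses total weight $dM+h=h^\star$ and leaves $T_{j^\star}$ intact while strictly hitting every $T_{j'}$ for $j'\ne j^\star$ (since any such $j'$ differs from $j^\star$ in at least one bit). A direct weight count, using $M>n$, then shows that every surviving clique of the form $U\cup S'$ with $U\subseteq V(G_{j'})$ and $S'\subseteq T_{j'}$ has weight strictly less than $k^\star$: if $|S'|<d$ then $|U|+|S'|M<dM+k=k^\star$, and if $S'=T_{j^\star}$ then $U$ is disjoint from $H$, so $|U|<k$. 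For the backward direction, a weight-counting argument using $M>h$ forces any feasible solution to contain at most $d$ selector vertices, and by the cocktail-party structure the only way to eliminate every ``full selector set'' bad clique $U\cup T_{j'}$ is to delete exactly one selector per pair in the opposite-bit pattern of some $j^\star$; the residual budget $h$ on $V(G_{j^\star})$ must then hit every $k$-clique of $G_{j^\star}$, witnessing that $(G_{j^\star},h,k)$ is a yes-instance.

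The main obstacle will be ruling out degenerate ``cheating'' solutions, most notably the pattern that deletes both selectors of a single pair at cost $2M$ and thereby trivially intersects every $T_j$ without selecting any specific $j^\star$. The intended fix is to augment $G^\star$ with polynomially many further weighted \emph{guard} vertices attached to the selector gadget---contributing additional bad cliques whose cheapest hitting coincides exactly with the honest ``one selector per pair'' deletion---tuned (via their weights, relative to $M$, $h$, $k$, and $d$) so that the honest solution just fits within $h^\star$ while any alternative, in particular any pair-doubled pattern, strictly exceeds $h^\star$ once all (original and guard) bad cliques are covered. Balancing these weights so that the construction remains polynomial, the honest direction stays tight, and every cheating pattern is provably infeasible is the technical crux of the proof. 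Once this is in place, $G^\star$ is computable in polynomial (in particular \PNP) time, and \cref{thm:okernellowerbound} delivers the claimed lower bound.
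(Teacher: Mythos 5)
Your construction as stated is unsound, and you correctly diagnose this yourself but do not repair it. In your cocktail-party gadget $K_{d\times 2}$ (with each of the two selectors in a pair non-adjacent), the only cliques of weight at least $k^\star=dM+k$ are of the form $T_j\cup U$ with $S'=T_j$ a full bit-pattern of selectors and $U$ a $k$-clique of $G_j$; but then deleting \emph{both} selectors of one fixed pair $\{s_\ell^0,s_\ell^1\}$ hits every $T_j$ (any $j$ picks exactly one of the two) at cost $2M\le dM\le h^\star$ as soon as $d\ge 2$, so the composed instance is \emph{always} a yes-instance, independently of the inputs. Your backward direction is therefore false. You explicitly defer the fix to an unspecified layer of ``guard vertices'' whose weights are to be ``balanced'' so that pair-doubling exceeds $h^\star$, and you call this ``the technical crux of the proof''---so the proposal is not a proof but an unfinished plan, and the missing piece is precisely the nontrivial part.

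The paper's construction resolves exactly this issue with a different selector gadget: the selection vertices $v_1,v_1',\dots,v_{\log t},v_{\log t}'$ (each of weight $n$) are made \emph{pairwise adjacent}, so in particular $v_i$ and $v_i'$ are adjacent, and for each pair $\{v_i,v_i'\}$ one attaches $h^\star+1$ fresh dummy vertices of weight exactly $k+n(\log t-2)$, each adjacent only to $v_i$ and $v_i'$. Then $\{v_i,v_i',d\}$ is a triangle of weight exactly $k^\star=k+n\log t$ for each such dummy $d$, and these $h^\star+1$ triangles intersect pairwise only in $\{v_i,v_i'\}$; since the total budget is $h^\star$, at least one of $v_i,v_i'$ must be deleted, and since each selector has weight $n$ and $n(1+\log t)>h^\star$, \emph{exactly} one per pair is deleted. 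This gives the enforced one-selector-per-pair structure for free, which is exactly what your gadget lacks. The rest of the paper's argument (encoding $j$ in bits, arguing that a surviving weight-$k^\star$ clique can involve at most $\log t$ selectors and must therefore contain a weight-$k$ subgraph inside a single $G_i$) is in the same spirit as your forward direction. To complete your proof you would either need to carry out the guard-vertex construction in full detail and verify the weight bookkeeping, or switch to the paper's clique-plus-dummy gadget, which avoids the pair-doubling loophole outright.
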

\begin{proof}
     To prove the theorem, we give a \PNP-OR-cross-composition from \textsc{Clique-Free Vertex Deletion} (unparameterized) into \textsc{Weighted Clique-Free Vertex Deletion} parameterized by $(h+k)$. Since \textsc{Clique-Free Vertex Deletion} is \SigmaPtwo-hard~\cite{rutenburg1994propositional}, by \cref{thm:okernellowerbound} this yields the result. 

     First of all, we consider two instances of \textsc{Clique-Free Vertex Deletion} $R$-equivalent if they have the same number of vertices and the same values for $h$ and $k$. Clearly, we have that $R$ is a polynomial \PNP-equivalence relation.

    Now assume we are given $t$ $R$-equivalent instances $x_1,\ldots,x_t$ of \textsc{Clique-Free Vertex Deletion} that each have $n$ vertices. Let $G_1, \ldots, G_t$ be the graphs of the respective instances. Assume w.l.o.g.\ that $t$ is a power of two.
    
    We create an instance $(G^\star,w,h^\star,k^\star)$ as follows. We start by setting $h^\star=h+n\cdot \log t$ and $k^\star=k+n\cdot \log t$, where we assume w.l.o.g.\ that $n>h,k>0$. We set $G^\star$ to be the disjoint union of $G_1,\ldots,G_t$. We set the weight of all $n\cdot t$ vertices of $G^\star$ to one. Now we add $2\log t$ ``selection vertices'' $v_1,v_1',v_2,v_2',\ldots,v_{\log t},v_{\log t}'$ and $(h^\star+1)\log t$ ``dummy vertices'' to $G^\star$. The selection vertices have weight $n$ and the dummy vertices have weight $k+n\cdot(\log t-2)$. We connect all selection vertices to a clique. Furthermore, for each $i\in[\log t]$ we connect $v_i$ and $v_i'$ to $h^\star+1$ dummy vertices (different dummy vertices for different $i$). Lastly, for each $i\in[t]$ and $j\in[\log t]$ we connect each vertex corresponding to $G_i$ to $v_j$ if the $j$th bit of the binary representation of $i$ is one, and we connect each vertex corresponding to $G_i$ to $v_j'$ if the $j$th bit of the binary representation of $i$ is zero. Here, we start counting at the least significant bit. This finishes the construction. We give an illustration in \cref{fig:nopk2}. The instance can clearly be computed in \PNP-time. Note that $(h^\star+k^\star)$ is polynomially bounded in $n+\log t$.

\contourlength{2pt}

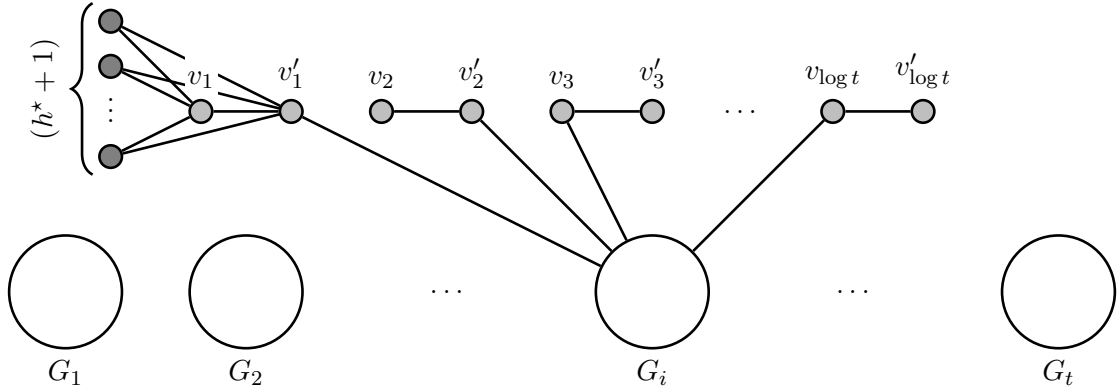
\begin{figure}[t]
\begin{center}
\begin{tikzpicture}[line width=1pt,scale=1.2,xscale=1]
\node (dv1) at (-.5,2) {};
\node (dvv1) at (.5,2) {};
\node[vert,fill=gray] (d1) at (-1.5,3) {};
\node[vert,fill=gray] (d2) at (-1.5,2.5) {};
\node (d) at (-1.5,2.1) {\vdots};
\node[vert,fill=gray] (dh) at (-1.5,1.5) {};

    \draw[decorate,decoration={brace,amplitude=8pt}] (-1.7,1.3) -- (-1.7,3.2) node [black,midway,rotate=90,yshift=18pt] {$(h^\star+1)$};

\draw (d1) --node[timelabel] {\phantom{x}} (dvv1);
\draw (d2) --node[timelabel] {\phantom{x}} (dvv1);
\draw (dh) -- (dvv1);

\node[vert,fill=lightgray,label=above:$v_1$] (v1) at (-.5,2) {};
\node[vert,fill=lightgray,label=above:$v_1'$] (vv1) at (.5,2) {};
\node[vert,fill=lightgray,label=above:$v_2$] (v2) at (1.5,2) {};
\node[vert,fill=lightgray,label=above:$v_2'$] (vv2) at (2.5,2) {};
\node[vert,fill=lightgray,label=above:$v_3$] (v3) at (3.5,2) {};
\node[vert,fill=lightgray,label=above:$v_3'$] (vv3) at (4.5,2) {};
\node (d) at (5.5,2) {\ldots};
\node[vert,fill=lightgray,label=above:$v_{\log t}$] (vt) at (6.5,2) {};
\node[vert,fill=lightgray,label=above:$v_{\log t}'$] (vvt) at (7.5,2) {};

\draw (d1) -- (v1);
\draw (d2) -- (v1);
\draw (dh) -- (v1);

\draw (v1) -- (vv1);
\draw (v2) -- (vv2);
\draw (v3) -- (vv3);
\draw (vt) -- (vvt);

\node[vert,label=below:$G_1$,minimum width=1.5cm] (g1) at (-2,0) {};
\node[vert,label=below:$G_2$,minimum width=1.5cm] (g2) at (0,0) {};
\node (d) at (2.25,0) {\ldots};
\node[vert,label=below:$G_i$,minimum width=1.5cm] (gi) at (4.5,0) {};
\node (d) at (6.75,0) {\ldots};
\node[vert,label=below:$G_t$,minimum width=1.5cm] (gt) at (9,0) {};

\draw (gi) -- (vv1);
\draw (gi) -- (vv2);
\draw (gi) -- (v3);
\draw (gi) -- (vt);

\end{tikzpicture}
\end{center}
    \caption{Illustration of the construction used in the proof of  \cref{thm:nopk2}. Dark gray vertices have weight $k+n(\log t-2)$ and light gray vertices have weight $n$. The dummy vertices are only depicted for~$v_1,v_1'$. 
    The selection vertices form a clique, none of the edges are depicted except between pairs of selection vertices of which exactly one needs to be deleted.
    In the example, we have that the first and second bit of the binary representation of $i$ is a zero, the third is a one, and the last is a one as well.}\label{fig:nopk2}
\end{figure}

    $(\Rightarrow)$: Assume that $(G^\star,w,h^\star,k^\star)$ is a \yes-instance of \textsc{Weighted Clique-Free Vertex Deletion}. Let $X$ be a deletion set of weight at most $h^\star$ such that $G^\star-X$ does not contain a clique of weight $k^\star$. First, note that for each $i\in[\log t]$ we have that one of $v_i$ and $v_i'$ need to be in $X$, since there are $h^\star+1$ cliques of weight $k^\star$ in $G^\star$ that intersect exactly in $v_i$ and $v_i'$. Since $v_i$ and $v_i'$ each have weight $n$ and $n\cdot (1+\log t)>h^\star$, we  have that exactly one of $v_i$ and $v_i'$ is contained in $X$.

    Let $i\in[t]$ be such that the $j$th bit of the binary representation of $i$ is one if and only if $v_j\notin X$. We claim that $(G_i,h,k)$ (instance $x_i$) is a \yes-instance of \textsc{Clique-Free Vertex Deletion}. First, note that $\log t$ selection vertices have weight $n\cdot\log t$, hence $X$ contains at most $h$ vertices corresponding to $G_i$. Let those vertices be $X_i$. We claim that $G_i-X_i$ does not contain any clique of size $k$. Assume for contradiction that it does. Then then every vertex in this clique is connected to $\log t$ selection vertices in $G^\star-X$. Since all selection vertices are pairwise connected, this implies that there is a clique of weight $k+n\cdot\log t$ in $G^\star-X$, a contradiction.
    
    $(\Leftarrow)$: Assume that $(G_i, h, k)$ with $i\in [t]$ is a \yes-instance of \textsc{Clique-Free Vertex Deletion}. Let $X$ be a deletion set of size at most $h$ such that $G_i-X$ does not contain any clique of size $k$.

    Now define $X'$ as follows. We add $v_j$ to $X'$ if the $j$th bit of the binary representation of $i$ is one, otherwise, we add $v_j'$ to $X'$. We claim that $X^\star=X\cup X'$ is a deletion set of weight at most $h^\star$ for~$G^\star$, that is, $G^\star-X^\star$ does not contain any clique of weight $k^\star$.

    First, note that for each $j\in[\log t]$, either $v_j$ or $v_j'$ is contained in $X^\star$. Hence, all cliques only consisting of selection vertices or dummy vertices in $G^\star-X^\star$ have weight less than $k^\star$. Furthermore, we have that for every $i'\neq i$, each vertex of $G_{i'}$ is connected to at most $\log t-1$ selection vertices in $G^\star-X^\star$, and hence, and maximum weight clique that involved vertices corresponding to $G_{i'}$ has weight at most $n\cdot \log t<h^\star$. 
    Now assume for contradiction that $G^\star-X^\star$ contains a clique of weight~$k^\star$. By the arguments above, we know that this clique consists of selection vertices and vertices corresponding to $G_i$. Furthermore, it contains at most $\log t$ selection vertices, which give weight of $n\cdot \log t$. This means that the vertices corresponding to $G_i$ in $G^\star-X^\star$ that are contained in this clique have weight $k$. This implies that there is a clique of size $k$ in $G_i-X$, a contradiction.
\end{proof}

\paragraph{Open Questions.} As already mentioned, we leave open whether \textsc{Clique-Free Vertex Deletion} admits a polynomial \OKernel when parameterized by $(h+k)$. Furthermore, our parameterized hardness results do not rule out that \textsc{Clique-Free Vertex Deletion} may be in \FPTNP when parameterized solely by $h$ or solely by $k$, which we also leave open for future research.

\section{Application to Discovery Problems}\label{sec:discovery}

We call a graph problem a ``discovery problem'' if the graph structure needs to be ``discovered'' by solving instances of an \NP-complete problem. More formally, for every pair of vertices $u,v$ in the input graph, we have an instance $I_{u,v}$ of an \NP-complete problem $L$. If $I_{u,v}$ is a \yes-instance of $L$, then there is an edge between $u$ and $v$, and otherwise there is not. Given a vertex set $V$ and for each pair $u,v\in V$ of vertices an instance $I_{u,v}$ of an \NP-complete problem, say \textsc{Satisfiability}, we call the graph $G=(V,E)$ where $\{u,v\}\in E$ if and only if $I_{u,v}$ is a \yes-instance the \emph{discovered graph}.

In this section, we first apply our framework to the discovery version of \textsc{Vertex Cover Reconfiguration}~\cite{ito2011complexity,mouawad2017parameterized,mouawad2018vertex}. Then in the second part, we discuss how to apply our framework to discovery problems in a general way.

\subsection{\textsc{Discovery Vertex Cover Reconfiguration}}

\emph{Reconfiguration problems} capture the setting where we wish to find a step-by-step transformation between two feasible (or minimal) solutions of a problem such that all intermediate results are also feasible. More specifically, given a graph $G=(V,E)$ and two solutions $S,T$ for some problem $P$, we wish to find a sequence of bounded length of solutions of bounded size, such that consecutive solutions have a small symmetric difference\footnote{We use the symbol $\triangle$ for the symmetric difference. For two sets $S,T$ we have $S\triangle T=(S\cup T)\setminus (S\cap T)$.}, say one (which models that a vertex can be either added or removed).

In this section, we investigate the following problem.

\problemdef{\textsc{Discovery Vertex Cover Reconfiguration}}{A vertex set $V$ and for each pair $u,v\in V$ of vertices an instance $I_{u,v}$ of \textsc{Satisfiability}, two minimal vertex covers $S,T\subseteq V$ for the discovered graph $G=(V,E)$, and two integers $k,\ell$.}{Does there exist a sequence of $\ell$ vertex covers $X_1,\ldots,X_\ell$ each of size at most $k$ for the discovered graph $G$ such that $X_1=S$, $X_\ell=T$, and for each $i\in[\ell-1]$ we have $|X_i\triangle X_{i+1}|\le 1$?}

\paragraph{Parameterizations.}

 This problem has several natural parameters that we will consider:
\begin{itemize}
    \item the size $\ell$ of the reconfiguration sequence,
    \item the size bound $k$ for the solutions, and 
    \item the number $n$ of vertices.
\end{itemize}
We clearly can assume that $k\le n$. Furthermore, we also can assume that $\ell\le 2^n$ since we can upper-bound the number of different vertex covers by $2^n$ and we can assume that no vertex cover is required to appear twice in the reconfiguration sequence.
\begin{theorem}\label{thm:vcdiscover}
    The problem \textsc{Discovery Vertex Cover Reconfiguration} is
    \begin{itemize}
        \item \PSPACE-hard,
        \item \paracoNP-hard when parameterized $n$, and
        \item in \FPTNP when parameterized by $k$.
    \end{itemize}
\end{theorem}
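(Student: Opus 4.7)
I would address the three claims separately. For \PSPACE-hardness, the plan is to reduce from standard \textsc{Vertex Cover Reconfiguration} in the TAR model, known to be \PSPACE-hard~\cite{ito2011complexity,mouawad2017parameterized}: given an instance $(G=(V,E), S, T, k, \ell)$, I would keep $V, S, T, k, \ell$ unchanged and assign $I_{u,v}$ a trivially satisfiable constant formula if $\{u,v\}\in E$ and a trivially unsatisfiable one otherwise; the discovered graph is then exactly $G$, and the two instances are equivalent. If the source problem does not insist on minimality of $S, T$, I would either cite a variant that does, or take the source $S, T$ to be complements of maximal independent sets, which is a property preserved by the reduction.

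For \paracoNP-hardness parameterized by $n$, the plan is to design a constant-size gadget on $V=\{a,b,c,d\}$, so $n=4$ is constant. I would set $I_{a,b}, I_{b,c}, I_{c,d}$ to be trivially satisfiable, $I_{a,c}, I_{b,d}$ to be trivially unsatisfiable, and $I_{a,d} = \phi$ for a \textsc{Satisfiability} input $\phi$. The discovered graph is then the path $a{-}b{-}c{-}d$ if $\phi$ is unsatisfiable and the cycle $C_4$ if $\phi$ is satisfiable. Fixing $S=\{a,c\}$, $T=\{b,d\}$, $k=3$, and $\ell=5$, both $S$ and $T$ are minimal vertex covers of both graphs. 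In the path case, the explicit sequence $\{a,c\}, \{a,b,c\}, \{b,c\}, \{b,c,d\}, \{b,d\}$ is a valid reconfiguration respecting $k=3$. In the cycle case, a short case analysis shows that starting from $\{a,c\}$ with $k=3$ the only reachable vertex covers are $\{a,c\}$, $\{a,b,c\}$, and $\{a,c,d\}$, so $\{b,d\}$ is unreachable. Hence the constructed instance is a \yes-instance iff $\phi$ is unsatisfiable, giving \coNP-hardness with $n=4$ constant.

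For the \FPTNP containment parameterized by $k$, the plan is to fully decouple graph discovery from the reconfiguration subproblem. For every pair $\{u,v\}\in\binom{V}{2}$, a single \NP-oracle query decides whether $I_{u,v}$ is satisfiable, so the entire edge set of the discovered graph $G$ is recovered with $\binom{n}{2}$ oracle calls; minimality of $S, T$ can be checked similarly with polynomially many queries. I would then invoke the known \FPT-algorithm for standard \textsc{Vertex Cover Reconfiguration} parameterized by $k$~\cite{mouawad2018vertex} on the recovered instance $(G, S, T, k, \ell)$. Combining polynomially many oracle queries with an \FPT-time post-processing step yields an \FPTNP-algorithm.

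The step I expect to require the most care is the \paracoNP-hardness reduction: I need to simultaneously guarantee that $S$ and $T$ remain \emph{minimal} vertex covers under \emph{both} possible discovered graphs, and that the interaction of $k=3$ with the reachability structure precisely pins the decision to (un)satisfiability of $\phi$. The \PSPACE-hardness and the \FPTNP containment should be essentially routine once the corresponding results for classical \textsc{Vertex Cover Reconfiguration} are in hand.
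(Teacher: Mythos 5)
Your proposal is correct and matches the paper's proof in all three parts: \PSPACE-hardness by observing that the discovery version subsumes ordinary \textsc{Vertex Cover Reconfiguration}, \paracoNP-hardness via the identical four-vertex gadget with $S=\{a,c\}$, $T=\{b,d\}$, $k=3$, $\ell=5$ and $I_{a,d}=\phi$, and \FPTNP membership by first reconstructing the discovered graph with $\binom{n}{2}$ oracle queries and then invoking the known \FPT algorithm for standard \textsc{Vertex Cover Reconfiguration}. One small citation slip: the \FPT algorithm for parameter $k$ is due to \citet{mouawad2017parameterized}, whereas \cite{mouawad2018vertex} establishes \wone-hardness for parameter $\ell$.
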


\begin{proof}
    The problem \textsc{Vertex Cover Reconfiguration}, which is the variant of \textsc{Discovery Vertex Cover Reconfiguration} where the input graph $G=(V,E)$ is given explicitly, is known to be \PSPACE-complete~\cite{ito2011complexity}. From this we immediately get the first result of \cref{thm:vcdiscover}. Furthermore, \textsc{Vertex Cover Reconfiguration} is known to be fixed-parameter tractable when parameterized by the size $k$ of a minimum vertex cover~\cite{mouawad2017parameterized}. From this, we can obtain the last result of \cref{thm:vcdiscover} by first discovering the graph using the \NP-oracle and then applying the \FPT-algorithm by \citet{mouawad2017parameterized}.

    The second result, intuitively, follows from the \NP-hardness of discovering the graph. We give a reduction from the \NP-hard problem \textsc{Satisfiability}~\cite{Karp72} and show that the produced \textsc{Discovery Vertex Cover Reconfiguration} instance is a \yes-instance if and only if the original \textsc{Satisfiability} instance is a \no-instace. Given an instance $\phi$ of \textsc{Satisfiability} we create an instance of \textsc{Discovery Vertex Cover Reconfiguration} as follows.
    We create a set of four vertices $V=\{a,b,c,d\}$. We set $I_{a,b}$, $I_{b,c}$, and $I_{c,d}$ to trivial a \yes-instance of \textsc{Satisfiability} (of constant size). We set $I_{a,d}$ to $\phi$. For all remaining vertex pairs $u,v\in V$, we set $I_{u,v}$ to a trivial \no-instance of \textsc{Satisfiability} (of constant size). We set $S=\{a,c\}$, $T=\{b,d\}$, $k=3$, and $\ell=5$. Note that $S$ and $T$ are minimal vertex covers for the discovered graph $G$, independently from whether $\phi$ is a \yes-instance of \textsc{Satisfiability} or not. For an illustration see \cref{fig:vchard} (right side).

    $(\Rightarrow)$: Assume that the produced \textsc{Discovery Vertex Cover Reconfiguration} instance is a \yes-instance. Assume for contradiction that $\phi$ is also a \yes-instance of \textsc{Satisfiability}. Then we have that the discovered graph $G$ has edges $\{a,b\}$, $\{b,c\}$, $\{c,d\}$, and $\{a,d\}$. Consider a set $X_2$ with $|S\triangle X_2|\le 1$ and $X_2\neq S$. If $|X_2|=1$, then $X_2$ is clearly not a vertex cover for $G$. Hence, we have that $|X_2|=3$. Assume that $X_2=\{a,b,c\}$ (the case where $X_2=\{a,c,d\}$ is analogous). 
    Now we have for any set $X_3$ with $|X_2\triangle X_3|\le 1$ and $|X_3|\le 3$ that either $X_3=S$ ($b$ is removed) or that edge $X_3$ is not a vertex cover (edge $\{a,d\}$ is not covered if $a$ is removed and edge $\{c,d\}$ is not covered if $c$ is removed).
    If follows that there is no reconfiguration sequence from $S$ to $T$ and we have reached a contradiction.

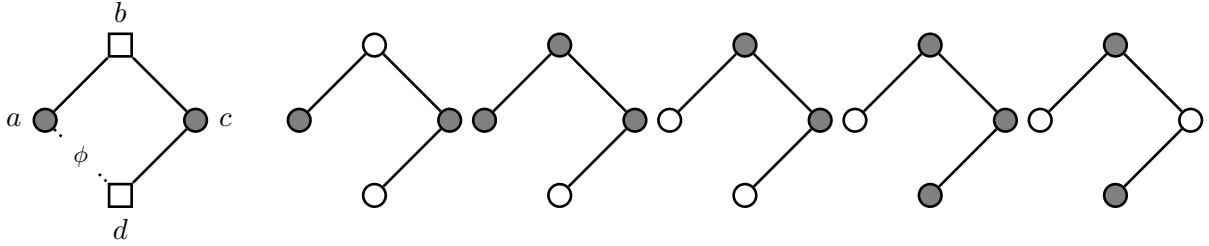
\begin{figure}[t]
\begin{center}
\begin{tikzpicture}[line width=1pt,scale=1,xscale=1]
\node[vert,label=left:$a$, fill=gray] (a) at (0,0) {};
\node[vert,label=above:$b$, rectangle] (b) at (1,1) {};
\node[vert,label=right:$c$, fill=gray] (c) at (2,0) {};
\node[vert,label=below:$d$, rectangle] (d) at (1,-1) {};

\draw (a) -- (b);
\draw (b) -- (c);
\draw (c) -- (d);
\draw[dotted] (a) --node[timelabel] {$\phi$} (d);

\end{tikzpicture}
\begin{tikzpicture}[line width=1pt,scale=1,xscale=1]
\phantom{
\node[vert,label=left:$a$] (a) at (0,0) {};
\node[vert,label=above:$b$] (bp) at (1,1) {};
\node[vert,label=below:$d$] (dp) at (1,-1) {};}
\node[vert, fill=gray] (a) at (0,0) {};
\node[vert] (b) at (1,1) {};
\node[vert, fill=gray] (c) at (2,0) {};
\node[vert] (d) at (1,-1) {};

\draw (a) -- (b);
\draw (b) -- (c);
\draw (c) -- (d);
\end{tikzpicture}
\begin{tikzpicture}[line width=1pt,scale=1,xscale=1]
\phantom{
\node[vert,label=above:$b$] (bp) at (1,1) {};
\node[vert,label=below:$d$] (dp) at (1,-1) {};}
\node[vert, fill=gray] (a) at (0,0) {};
\node[vert, fill=gray] (b) at (1,1) {};
\node[vert, fill=gray] (c) at (2,0) {};
\node[vert] (d) at (1,-1) {};

\draw (a) -- (b);
\draw (b) -- (c);
\draw (c) -- (d);
\end{tikzpicture}
\begin{tikzpicture}[line width=1pt,scale=1,xscale=1]
\phantom{
\node[vert,label=above:$b$] (bp) at (1,1) {};
\node[vert,label=below:$d$] (dp) at (1,-1) {};}
\node[vert] (a) at (0,0) {};
\node[vert, fill=gray] (b) at (1,1) {};
\node[vert, fill=gray] (c) at (2,0) {};
\node[vert] (d) at (1,-1) {};

\draw (a) -- (b);
\draw (b) -- (c);
\draw (c) -- (d);
\end{tikzpicture}
\begin{tikzpicture}[line width=1pt,scale=1,xscale=1]
\phantom{
\node[vert,label=above:$b$] (bp) at (1,1) {};
\node[vert,label=below:$d$] (dp) at (1,-1) {};}
\node[vert] (a) at (0,0) {};
\node[vert, fill=gray] (b) at (1,1) {};
\node[vert, fill=gray] (c) at (2,0) {};
\node[vert, fill=gray] (d) at (1,-1) {};

\draw (a) -- (b);
\draw (b) -- (c);
\draw (c) -- (d);
\end{tikzpicture}
\begin{tikzpicture}[line width=1pt,scale=1,xscale=1]
\phantom{
\node[vert,label=above:$b$] (bp) at (1,1) {};
\node[vert,label=below:$d$] (dp) at (1,-1) {};}
\node[vert] (a) at (0,0) {};
\node[vert, fill=gray] (b) at (1,1) {};
\node[vert] (c) at (2,0) {};
\node[vert, fill=gray] (d) at (1,-1) {};

\draw (a) -- (b);
\draw (b) -- (c);
\draw (c) -- (d);
\end{tikzpicture}
\end{center}
    \caption{Illustration of the construction used in the last part of the proof of \cref{thm:vcdiscover}. The left side shows the produced instance, where the gray vertices are the starting vertex cover, the square-shaped vertices are the target vertex cover, and edge $\{a,d\}$ is present if and only if $\phi$ is a \yes-instance. The right side shows the reconfiguration sequence constructed if $\phi$ is a \no-instance, where the gray vertices are contained in the respective vertex covers.}\label{fig:vchard}
\end{figure}

    $(\Leftarrow)$: Assume that $\phi$ is a \no-instance of \textsc{Satisfiability}. Then we have that the discovered graph $G$ has edges $\{a,b\}$, $\{b,c\}$, and $\{c,d\}$. Consider the sets $X_2=\{a,b,c\}$, $X_3=\{b,c\}$, and $X_4=\{b,c,d\}$. It is easy to verify that each set is a vertex cover of size at most 3 for $G$. Furthermore, we have $|X_i\triangle X_{i+1}|\le 1$ for each $i\in[4]$, where $X_1=S$ and $X_5=T$. Hence, we have that $S,X_2,X_3,X_4,T$ is a configuration sequence of length $5$ from $S$ to $T$. For an illustration see \cref{fig:vchard} (left side). It follows that the produced \textsc{Discovery Vertex Cover Reconfiguration} instance is a \yes-instance.
\end{proof}

\paragraph{Kernelization.} \cref{thm:vcdiscover} raises the natural question of whether \textsc{Discovery Vertex Cover Reconfiguration} admits a polynomial \OKernel when parameterized by $k$. We answer this question positively. \citet{mouawad2017parameterized} showed that \textsc{Vertex Cover Reconfiguration} admits a ``polynomial reconfiguration kernel'' when parameterized by $k$. We remark that they use a slightly different problem definition, where they do not require the start and target vertex covers to be minimal, which seems to be the main reason they do not show that the problem admits a regular polynomial kernel. For details, we refer to their paper~\cite{mouawad2017parameterized}. Informally, polynomial reconfiguration kernels are more similar to polynomial Turing kernels~\cite[Chapter~22]{Fom+19}, in that the kernelization algorithm produces a set of instances whose size is bounded polynomially in the parameter, and the original instance is required to be a \yes-instance if and only one of the instances in the set is a \yes-instance.  Here, we show how to adapt the ideas of \citet{mouawad2017parameterized} to obtain a polynomial \OKernel for \textsc{Discovery Vertex Cover Reconfiguration} parameterized by $k$.

\begin{theorem}\label{thm:vckernel}
    The problem \textsc{Discovery Vertex Cover Reconfiguration} admits a polynomial \OKernel when parameterized by $k$.
\end{theorem}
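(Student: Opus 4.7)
The plan is to adapt the polynomial reconfiguration kernel of \citet{mouawad2017parameterized} to our discovery setting, using the \NP-oracle to discover the relevant parts of the graph $G$ and exploiting the minimality of $S$ and $T$ (which is enforced in our problem statement but not in theirs) to obtain a genuine polynomial \OKernel, rather than only a Turing-style reconfiguration kernel.

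First, we preprocess the instance. If $|S|>k$ or $|T|>k$, then no reconfiguration sequence of the required form exists, and we output a trivial \no-instance of constant size. Otherwise $|S\cup T|\le 2k$. Using one \NP-oracle query per pair in $(S\cup T)\times V$, we discover every edge of $G$ that is incident to $S\cup T$, which costs at most $2k\cdot |V|$ oracle calls, and we simultaneously verify that $S$ and $T$ really are minimal vertex covers of the discovered graph.

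Next, we exploit the following structural observation to isolate the ``core'' of the instance. Since $S$ is a vertex cover of $G$, the set $V\setminus S$ is independent; analogously $V\setminus T$ is independent. Hence every vertex $v\in V\setminus(S\cup T)$ has all its neighbors in $S\cap T$, because every edge at $v$ must be covered both by $S$ and by $T$. By the minimality of $S$, every $u\in S$ has a private neighbor $p_u\in V\setminus S$ witnessing that $S\setminus\{u\}$ fails to cover some edge, and analogously for $T$. All such witnesses can be located among the already-discovered edges incident to $S\cup T$. We declare the ``core'' to be the set consisting of $S\cup T$ together with one private-neighbor witness per vertex of $S$ and of $T$, yielding at most $4k$ vertices.

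Then, following the ideas of \citet{mouawad2017parameterized}, we reduce the vertices in $V\setminus(S\cup T)$, each of which is adjacent only to $S\cap T$. Two such vertices with identical neighborhood in $S\cap T$ are interchangeable during the reconfiguration, and only a bounded number of representatives per neighborhood type is required to preserve the existence of a length-$\ell$ sequence of covers of size at most $k$. Since $|S\cap T|\le k$ and the reduction rules of \cite{mouawad2017parameterized} keep only $\operatorname{poly}(k)$ such representatives, we obtain a vertex set $V'$ of size $\operatorname{poly}(k)$. The output discovery instance is on $V'$, with start set $S\cap V'$, target set $T\cap V'$, and, for every pair $u,v\in V'$, the SAT instance $I_{u,v}$ replaced by a trivial \yes- or \no-instance encoding the already-discovered adjacency; the values of $k$ and $\ell$ are kept unchanged.

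The main obstacle will be twofold. First, we must preserve the minimality of $S\cap V'$ and $T\cap V'$ in the kernel, which is the crucial property distinguishing our setting from that of \cite{mouawad2017parameterized}; this is why the core explicitly retains one private-neighbor witness per vertex of $S$ and of $T$. Second, we must faithfully adapt their reduction rules, originally stated for an explicitly given graph and the non-minimal variant of the problem, to our setting where non-core adjacencies must be queried through the \NP-oracle and where the additional minimality constraint forbids the removal of certain vertices that their proof could safely discard.
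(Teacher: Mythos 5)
Your overall strategy---discover the relevant edges via the \NP-oracle, shrink the vertex set to size polynomial in $k$, and re-encode the result as a discovery instance with trivial SAT gadgets---matches the paper's. You also correctly identify that the minimality requirement on $S$ and $T$ is what distinguishes our setting from that of \citet{mouawad2017parameterized}, and retaining explicit private-neighbor witnesses for each vertex of $S$ and $T$ is a clean way to preserve minimality in the output. However, there is a gap in the step that is supposed to deliver the polynomial size bound.

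You argue that vertices of $V\setminus(S\cup T)$ have all their neighbors in $S\cap T$, group them by ``neighborhood type,'' and claim that keeping a bounded number of representatives per type yields $\operatorname{poly}(k)$ vertices in total. Since $|S\cap T|$ can be as large as $k$, there can be up to $2^k$ distinct neighborhood types, so a per-type bound does not give a polynomial-in-$k$ bound; deferring to ``the reduction rules of Mouawad et al.'' does not resolve this, because those rules in turn rest on a structural theorem that you never invoke. The crucial ingredient the paper uses---and which your proposal is missing---is the result of \citet{damaschke2009union} that the union of all minimal vertex covers of size at most $k$ spans only $\OO(k^2)$ vertices, together with the observation of \citet{mouawad2017parameterized} that any reconfiguration sequence of covers of size at most $k$ can be rerouted through vertices that lie in some minimal vertex cover. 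With these two facts, one simply discovers the whole graph (still only $\OO(n^2)$ oracle calls, so well within \PNP), applies Damaschke's $\OO(k^2)$ kernel, and encodes the resulting graph as a discovery instance; there is no need for the finer bookkeeping you do on edges incident to $S\cup T$. Your proposal would become correct if you replaced the neighborhood-type argument with an explicit appeal to Damaschke's bound (or an equivalent $\OO(k^2)$ bound on the union of small minimal vertex covers), at which point it essentially coincides with the paper's proof.
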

\begin{proof}
We first use the \NP-oracle to compute the discovered graph $G$ in \PNP-time. Then, we use a kernelization algorithm by \citet{damaschke2009union} that produces a graph $G'$ of size $\OO(k^2)$ that contains all minimum vertex covers of size at most $k$ of $G$.
\citet{mouawad2017parameterized} showed that if there is a reconfiguration sequence from $S$ to $T$, then there is also a reconfiguration sequence from $S$ to $T$ that only uses vertices that are contained in some minimal vertex cover, and hence are contained in~$G'$.
We produce an instance of \textsc{Discovery Vertex Cover Reconfiguration} as follows. We use the vertex set of $G'$ and for each pair $u,v$ of vertices we set $I_{u,v}$ to a trivial \yes-instance (of constant size) of \textsc{Satisfiability} if $\{u,v\}$ is an edge in $G'$; otherwise, we set $I_{u,v}$ to a trivial \no-instance (of constant size) of \textsc{Satisfiability}. We leave $k$ and $\ell$ as they are. The produced instance clearly has size $\OO(k^2)$ and is computed in \PNP-time. 
If the reduced instance is a \yes-instance, then clearly the original instance is one as well, since any solution for the reduced instance is also a solution for the original instance.
If the original instance is a \yes-instance, then there exists a reconfiguration sequence that only uses vertices from $G'$~\cite{mouawad2017parameterized} and hence the reduced instance is also a \yes-instance.
\end{proof}

\paragraph{Open Questions.} We leave the parameterized complexity of \textsc{Discovery Vertex Cover Reconfiguration} when parameterized by the configuration length $\ell$ open. We remark that \textsc{Vertex Cover Reconfiguration} is known to be \wone-hard when parameterized by $\ell$~\cite{mouawad2018vertex}.

\subsection{General Discovery Problems}

Note that the approach from \cref{thm:vckernel}, while being straightforward, is also very general. Informally, whenever we have a graph problem $L$ that admits a polynomial kernel for some parameterization, the ``discovery version'' for that problem admits a polynomial \OKernel for the same parameterization. This is interesting if $L$ is presumably not in \PNP (that is, e.g.\ \SigmaPtwo-hard or \PiPtwo-hard), since otherwise the \OKernel can solve the problem and produce trivial equivalent instances of constant size. In the following, we formalize this idea and show that it is also applicable to non-graph problems.

Formally, we call a problem $L$ a \emph{graph problem} if it has a graph as part of the input, and the input size is polynomially bounded by the size of the graph. We call a problem $L$ a \emph{Boolean problem} if it has a Boolean formula (in CNF or DNF) as part of the input, and the input size is polynomially bounded by the size of the formula. We call a problem $L$ a \emph{set system problem} if it has a family of sets (over the same universe) as part of the input, and the input size is polynomially bounded by the size of the set family.
Examples of graph problems are \textsc{Vertex Cover} and \textsc{Clique}, examples of Boolean problems are \textsc{Satisfiability} and \textsc{Tautology}, and examples of set system problems are \textsc{Hitting Set} and \textsc{Set Cover}~\cite{Karp72,GJ79}.

We have already given the main idea of the definition of a discovery graph problem at the beginning of this section. Formally, if $L$ is a graph problem, a Boolean problem, or a set system problem, then the problem \textsc{Discovery~$L$} is defined as follows.

\problembox{\textsc{Discovery~{\boldmath $L$}}}{
\begin{itemize}
    \item If $L$ is a graph problem, then instead of having a graph $G=(V,E)$ as input, \textsc{Discovery~$L$} has a set of vertices $V$ as input and for each pair of vertices $u,v\in V$, we have an instance $I_{u,v}$ of \textsc{Satisfiability}. If $I_{u,v}$ is a \yes-instance, then this is interpreted as an edge being present between $u$ and $v$ in $G$. If $I_{u,v}$ is a \no-instance, then this is interpreted as no edge being present between $u$ and $v$ in $G$.
    \item If $L$ is a Boolean problem, then instead of having a Boolean formula $\phi$ as input, \textsc{Discovery~$L$} has a set of variables $X$ and a number of clauses $C$ as input, and for each pair of a literal (negated or non-negated variable) and a clause $\ell,c\in \{x,\neg x\mid x\in X\}\times [C]$, we have an instance $I_{\ell,c}$ of \textsc{Satisfiability}. If $I_{\ell,c}$ is a \yes-instance, then this is interpreted as the literal $\ell$ appearing in clause $c$ of $\phi$. If $I_{\ell,c}$ is a \no-instance, then this is interpreted as the literal $\ell$ not appearing in clause $c$ of $\phi$.
    \item If $L$ is a set system problem, then instead of having an explicitly given family of sets $S_1, \ldots, S_n\subseteq U$ over some universe $U$ as input, \textsc{Discovery~$L$} has the universe set $U$ and a number of sets $S$ as input, and for each pair of a universe element and a set $u,s\in U\times [S]$, we have an instance $I_{u,s}$ of \textsc{Satisfiability}. If $I_{u,s}$ is a \yes-instance, then this is interpreted as the element $u$ being contained in set $s$. If $I_{u,s}$ is a \no-instance, then this is interpreted as the element $u$ not being contained in set $s$.
\end{itemize}
}

Formally, we obtain the following meta-theorem.
\begin{theorem}
    Let $L$ be a graph problem, a Boolean problem, or a set system problem. If $L$ admits a polynomial kernel when parameterized by some parameter $k$, then \textsc{Discovery~$L$} parameterized by $k$ admits a polynomial \OKernel.
\end{theorem}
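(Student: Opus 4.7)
The plan is to generalize the argument used for \cref{thm:vckernel} to the abstract setting of the meta-theorem. Let $(x,k)$ be an instance of \textsc{Discovery~}$L$. First I would use the \NP-oracle to ``materialize'' the hidden combinatorial object: for a graph problem, query each $I_{u,v}$ to decide whether $\{u,v\}\in E$; for a Boolean problem, query each $I_{\ell,c}$ to decide whether literal $\ell$ appears in clause $c$; for a set system problem, query each $I_{u,s}$ to decide whether element $u$ belongs to set $s$. By the input-size assumptions in the definition of each problem class, the number of oracle queries is polynomial in $|x|$, so this step runs in \PNP-time and produces an explicit instance $y$ of the underlying problem $L$ with parameter $k$ that is equivalent to $(x,k)$.

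Next I would invoke the hypothesized polynomial kernelization algorithm for $L$ with respect to $k$ on input $(y,k)$, producing an equivalent instance $(y',k')$ of $L$ with $|y'|+k'\in k^{\OO(1)}$. Then I would translate $(y',k')$ back into an instance $(x',k')$ of \textsc{Discovery~}$L$ by declaring, for each pair that indexes a \textsc{Satisfiability} subinstance in the Discovery formulation, a trivial constant-size \yes-instance of \textsc{Satisfiability} if the corresponding edge/literal-in-clause/element-in-set is present in $y'$, and a trivial constant-size \no-instance otherwise. Since the number of such indices is polynomially bounded in $|y'|$, the resulting instance $(x',k')$ has size $k^{\OO(1)}$ and is produced in \PNP-time. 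The discovered object associated with $x'$ is by construction exactly $y'$, so $(x',k')$ is a \yes-instance of \textsc{Discovery~}$L$ if and only if $(y',k')$ is a \yes-instance of $L$, which by correctness of the classical kernel holds if and only if $(y,k)$, and hence $(x,k)$, is a \yes-instance.

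The only step that requires care is verifying that the three problem categories (graph, Boolean, set system) all satisfy the condition that ``materializing'' and ``re-encoding'' the hidden object incurs only polynomial blow-up; this is exactly what the assumption that the input size is polynomially bounded by the size of the underlying object gives us. No genuine obstacle arises: the argument is essentially a black-box reduction that unifies and lifts the concrete construction of \cref{thm:vckernel}, and the parameter value $k$ is carried through the pipeline unchanged (or replaced by $k'\le k^{\OO(1)}$ as delivered by the classical kernel).
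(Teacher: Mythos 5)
Your proof is correct and follows essentially the same three-step pipeline as the paper's own proof: materialize the hidden object via oracle queries, run the classical polynomial kernelization for $L$, then re-encode the kernel as a \textsc{Discovery~$L$} instance using trivial constant-size \textsc{Satisfiability} subinstances. The paper's argument is just a more terse statement of this same reduction.
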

\begin{proof}
    Given an instance $I$ of \textsc{Discovery~$L$} parameterized by $k$, we use an \NP-oracle to decide all \textsc{Satisfiability} instances in the input, hereby computing a graph $G$, a Boolean formula $\phi$, or a set system $S_1, \ldots, S_n\subseteq U$, respectively. With this, we can create an equivalent instance $I'$ of~$L$ parameterized by $k$. Now we apply the kernelization algorithm for $L$ to $I'$, thereby creating an equivalent instance $I''$ with $|I''|\in k^{\OO(1)}$. We transform the instance $I''$ of $L$ to an equivalent instance~$I'''$ of \textsc{Discovery~$L$} by using trivial \yes- and \no-instances (of constant size) of \textsc{Satisfiability} in a straightforward manner.
    This finishes the description of the polynomial \OKernel algorithm. It clearly runs in \PNP-time and we clearly have that $|I'''|\in k^{\OO(1)}$. The correctness follows from the correctness of the kernelization algorithm for $L$ parameterized by $k$.
\end{proof}

\section{Conclusion and Future Work}
Motivated by the success of SAT-solvers and ILP-solvers, we introduced a new concept for kernels, called \OKernel{s}, that allows the kernelization algorithms to use \NP-oracles. Furthermore, we presented a method to refute such \OKernel{s} of polynomial size. To illustrate its use case, we applied our framework to three problem settings. For each one, we gave specific open questions in the respective section.

There are several natural future work directions to expand this framework. Since \NP-oracle calls are still much more computationally expensive than basic operations, one may want to limit the number of oracle calls that the kernelization algorithms may use. Natural restrictions would be to allow only a logarithmic number or a constant number of oracle calls. On the flip side, one might consider giving the kernelization algorithm access to more powerful oracles, such as e.g.~\SigmaPtwo-oracles. A natural extension of the composition framework would be to provide \PNP-AND-cross-compositions. They can be defined in an analogous way to the \PNP-OR-cross-compositions, and we conjecture that the result for the classical AND-cross-compositions can be adapted to the oracle setting.

Finally, there are many further problems to which our framework can be applied. Indeed, we can, essentially, pick any parameterized problem that is unlikely to be in \PNP or in \FPT, and check whether it admits a \OKernel.

\bibliographystyle{abbrvnat}
\bibliography{bib}	

\end{document}